\newcommand\independent{\protect\mathpalette{\protect\independenT}{\perp}}
\def\independenT#1#2{\mathrel{\rlap{$#1#2$}\mkern2mu{#1#2}}}
\DeclareMathOperator{\E}{\mbox{E}}
\definecolor{forestgreen}{RGB}{34,139,34}
\def\paperversionmajor{7}
\def\paperversionminor{0}
\newcommand*{\addFileDependency}[1]{
  \typeout{(#1)}
  \@addtofilelist{#1}
  \IfFileExists{#1}{}{\typeout{No file #1.}}
}
\newcommand*{\myexternaldocument}[1]{%
    \externaldocument{#1}%
    \addFileDependency{#1.tex}%
    \addFileDependency{#1.aux}%
}
\begin{document}


\title{%
Center-specific causal inference with multicenter trials \\
\Large Reinterpreting trial evidence in the context of each participating center}



\author[1,2]{Sarah E. Robertson\thanks{Address for correspondence: Sarah E. Robertson; Box G-S121-8; Brown University School of Public Health, Providence, RI 02912; email: \texttt{sarah\_robertson@brown.edu}; phone: (401) 863-1000.}}
\author[3]{Jon A. Steingrimsson}
\author[4]{Nina R. Joyce}
\author[5]{Elizabeth A. Stuart}
\author[6]{Issa J. Dahabreh}

\affil[1]{Center for Evidence Synthesis in Health, Brown University School of Public Health, Providence, RI}
\affil[2]{Department of Health Services, Policy \& Practice, Brown University School of Public Health, Providence, RI}
\affil[3]{Department of Biostatistics, Brown University School of Public Health, Providence, RI }
\affil[4]{Department of Epidemiology, Brown University School of Public Health, Providence, RI}
\affil[5]{Departments of Mental Health, Biostatistics, and Health Policy and Management, Johns Hopkins Bloomberg School of Public Health, Baltimore, MD}
\affil[6]{Departments of Epidemiology and Biostatistics, Harvard School of Public Health, Boston, MA}


\maketitle{}

\thispagestyle{empty}

\noindent

\vspace{0.1in}
\noindent

\vspace{0.1in}
\noindent

\vspace{0.1in}
\noindent

\clearpage
\noindent

\vspace{0.1in}
\noindent

\thispagestyle{empty}

\clearpage
\setcounter{page}{1}

\hypersetup{pageanchor=true}

\vspace*{0.25in}

\abstract{
\noindent
In multicenter randomized trials, when effect modifiers have a different distribution across centers, comparisons between treatment groups that average over centers may not apply to any of the populations underlying the individual centers. Here, we describe methods for reinterpreting the evidence produced by a multicenter trial in the context of the population underlying each center. We describe how to identify center-specific effects under identifiability conditions that are largely supported by the study design and when associations between center membership and the outcome may be present, given baseline covariates and treatment (``center-outcome associations''). We then consider an additional condition of no center-outcome associations given baseline covariates and treatment. We show that this condition can be assessed using the trial data; when it holds, center-specific treatment effects can be estimated using analyses that completely pool information across centers. We propose methods for estimating center-specific average treatment effects, when center-outcome associations may be present and when they are absent, and describe approaches for assessing whether center-specific treatment effects are homogeneous. We evaluate the performance of the methods in a simulation study and illustrate their implementation using data from the Hepatitis C Antiviral Long-Term Treatment Against Cirrhosis trial.}

\clearpage
\clearpage
\section{Background}
\setcounter{page}{1}

Multicenter trials (also know as multisite individually randomized trials \cite{miratrix2020applied}) can generate robust, clinically relevant evidence by attaining sample sizes that are large enough to precisely estimate modest effects and by enrolling individuals from diverse populations \cite{baigent1997need, fleiss1986analysis, raudenbush2000statistical}. Most major clinical trials in the past decade have had a multicenter design \cite{califf2012characteristics}.

Analyses of multicenter trials typically produce a single overall effect estimate, under the assumption that the treatment effect does not vary across centers or that a sample-size weighted average of center-specific effects is meaningful. Yet, the sample of participants in a multicenter trial is unlikely to reflect any actual underlying population and instead is affected by the choice of centers and their recruitment practices. Consequently, when the assumption of a common treatment effect across centers also does not hold, that is, when the treatment effect varies over effect modifiers (moderators) that have a different distribution across centers, the overall effect estimate across centers may not apply to any of the populations underlying the centers \cite{fleiss1986analysis}. Differences in the distribution of effect modifiers across centers might occur because centers serve different underlying populations (e.g., because referral patterns result in case-mix variability) \cite{ berkowitz2019detecting}. Therefore, investigators at centers participating in the trial will often want to reinterpret the evidence produced by the multicenter trial in the context of the population underlying their own center \cite{orr2019using}. 

Methods for analyzing multicenter trials that do not assume a common average treatment effect across centers, and instead estimate center-specific treatment effects, are available \cite{senn2008statistical}. The most common approach is to estimate center-specific effects using an unadjusted comparison of the outcome between treatment groups, separately for each center. This approach is straightforward, but when estimating the treatment effect in a particular center it ignores the evidence from all other centers. To improve precision, it may be desirable to use information across centers. For example, it is possible to combine information from different centers to estimate center-specific effects with various ``random effects'' methods \cite{gould1998multi, louis1991using}. These methods rely on the assumption that centers (or, at least, center effects) are sampled from a population \cite{senn2008statistical, higgins2009re}. Modeling the distribution of effects requires unidentifiable assumptions \cite{hubbard2010gee} and is particularly challenging when the number of centers is small. When incorporating covariate information, these methods estimate conditional average treatment effects and require correct model specification for valid inference.

Here, we describe methods for estimating center-specific effects when the centers are assumed to be fixed. Our approach allows for heterogeneous center-specific average treatment effect and does not assume that centers (or effects) are sampled from a population. Unlike conventional outcome regressions with fixed center effects, we allow for effect modification by baseline covariates and marginalize over the distribution of those covariates to obtain center-specific average treatment effects. We provide two main results, one that is valid in the presence of associations between center membership and the outcome given covariates and treatment (we refer to these associations as ``center-outcome associations''), and one that is valid under a condition of ``no center-outcome associations,'' when the outcome is independent of the center membership given covariates and treatment. When center-outcome associations may be present, we show how center-specific treatment effects can be identified, conditional on baseline covariates and treatment, under identifiability conditions that are largely supported by study design. When there are no center-outcome associations given baseline covariates and treatment, center-specific treatment effects can be estimated using analyses that combine information across centers. We propose methods for estimating center-specific average treatment effects and describe approaches for assessing whether these effects are homogeneous. Our proposed estimators can exploit available treatment, covariate, and outcome information across centers, are robust to model misspecification, and allow for the use of machine learning methods for estimating models of the outcome and, when needed, the probability of participation in each center. We evaluate the performance of the methods in a simulation study and illustrate the implementation of the methods using data from the Hepatitis C Antiviral Long-Term Treatment Against Cirrhosis trial.

\section{Study design and causal quantities of interest}

We use individual-level data from randomized individuals in a collection of centers $\mathcal{C}=\{1, \cdots, m \}$ from a multicenter trial; here, $m$ is the total number of centers. The trial involves random assignment of treatment $A$, for each individual, that takes values in a finite set $\mathcal{A}$ to estimate the effects of treatment on a (continuous, binary, or count) outcome $Y$ that is assessed at the end of the study.

Our sampling model assumes that the participants in each center can be viewed as a simple random sample from some underlying infinite super-population of individuals that is stratified by $C$ \cite{dahabreh2019study, dahabreh2020toward}. Importantly, the number of centers in $\mathcal C$ is assumed to be fixed (and not growing with sample size), and we allow the sampling fractions from the populations underlying each center to be variable and unknown to the investigators. We describe the sampling model in more detail in Appendix 1. For all randomized individuals we have information on their center membership, $C$, treatment assignment $A$, baseline covariates $X$, and outcome $Y$. Thus, each center $c \in \mathcal C$ contributes information on $(X_i, C_i=c, A_i, Y_i)$, $i=1, \cdots, n_c$,  where $n_c$ denotes the total number of randomized individuals in that center. The total number of randomized individuals across all the centers in the trial is $n=\sum_{c=1}^{m} n_c$. Table \ref{table_data_structure} shows the observed data structure for a multicenter trial. Throughout, we use $I(\cdot)$ to denote the indicator function; italicized capital letters denote random variables and lowercase letters denote realizations of random variables.

We would like to use the trial data to learn about causal quantities in the population underlying each of the trial centers. Let $Y^a$ be the potential (counterfactual) outcome under intervention to set treatment $A$ to $a \in \mathcal{A}$ \cite{Rubin1974, holland1986statistics, robins2000d}. For simplicity we assume complete adherence to the assigned treatment in the trial and no loss to followup, but the methods we discuss can be extended to account for non-adherence and incomplete followup \cite{dahabreh2019generalizing, dahabreh2020toward}. We are interested in the center-specific average treatment effect, $\E[Y^a-Y^{a'}| C = c] =  \E[Y^{a} | C=c ]  - \E[Y^{a'}|C=c]$, for each $c \in \mathcal C$ and every pair of treatments $a$ and $a^\prime$ in $\mathcal A$. Because the components of the center-specific average treatment effects, that is, the center-specific potential outcome means, $\E[Y^a|C=c]$ for $c \in \mathcal C$, are of inherent scientific interest, in the remainder of the main text, we focus on their identification and estimation. In Appendix 2 we present identification results for center-specific potential outcome means and center-specific average treatment effects, and we also show that the center-specific treatment effects can be identified and estimated under weaker conditions (which do not, however, suffice to identify the potential outcome means).

\section{Identification}

\subsection{Center-outcome associations may be present} 

\paragraph{Identifiability conditions:} We first consider three fairly ``standard'' conditions that allow us to identify center-specific average treatment effects, without any assumption about the absence of center-outcome associations, conditional on baseline covariates and treatment. 

\vspace{0.05in}
\noindent
\emph{A1 – Consistency of potential outcomes:} if $A_i = a$, then $Y_i^a=Y_i$, for every individual $i$ in the population underlying the randomized trial and each treatment $a \in \mathcal A$.

\vspace{0.05in}
\noindent
\emph{A2 – Conditional exchangeability over $A$ in the trial}: for every $a \in \mathcal A$ and every $c \in \mathcal C$, $Y^a \independent A |(X,C = c)$. 

\vspace{0.05in}
\noindent
\emph{A3 – Positivity of treatment assignment in each center}:  
for every covariate pattern $x$ and every center $c$ with positive density $f(x,c) \neq 0$, $\Pr[A=a|X=x,C=c]>0$, for each treatment $a \in \mathcal  A$. 

Conditions A1 through A3 are fairly standard conditions when analyzing a randomized trial \cite{hernan2020causal} and are largely supported by the design of the trial. Condition A1 requires that the intervention is well-defined, in the sense that there is no interference and no outcome relevant treatment variation \cite{vanderWeele2009, hernan2011compound, vanderWeele2013causal}, and that center membership does not affect the outcome, except possibly through treatment (an exclusion restriction assumption \cite{dahabreh2019generalizing, dahabreh2020benchmarking}). In Appendix 2 we show how some of our results can be given a causal interpretation even if center membership has effects on the outcome that are not mediated by treatment (e.g., if the centers apply different treatments that directly affect the outcome). 

Condition A2 is a condition of no unmeasured confounding within strata defined by center and baseline covariates. This condition is expected to hold in marginally (unconditionally) randomized trials, where the treatment assignment mechanism does not depend on covariates and and does not vary across centers. It is also expected to hold in conditionally randomized trials where randomization is conditional on covariates within each center, and where the treatment assignment mechanism may vary across centers. In fact, in most multicenter trials treatment assignment is independent of covariates and center membership, such that, by design, $(Y^a,X,C) \independent A$. This condition is stronger than condition A2 because $(Y^a,X,C) \independent A$ implies $Y^a \independent A |(X,C)$; but $Y^a \independent A |(X,C)$ does not necessarily imply $(Y^a,X,C) \independent A$.

Condition A3 simply requires that, within covariate levels, every center randomizes all individuals to each level of treatment with a nonzero probability. This condition is also expected to hold both in marginally and conditionally randomized trials.

\paragraph{Identification:} Under conditions A1 through A3, for $c \in \mathcal C$ and $a \in \mathcal A$, the center-specific potential outcome mean, $\E[Y^a |C=c]$, is identified by the observed data functional 
\begin{equation}\label{id:phi}
    \phi(c, a)= \E \big[ \E[Y|X, C=c, A=a] \big| C=c \big] = \E \big [\E[Y|X, C, A=a] \big | C=c \big ].
\end{equation} 
The inner expectation in this expression conditions on center $C$, among individuals assigned to $A=a$. The outer expectation marginalizes over the covariate distribution in the population underlying the target center of interest, $C=c$. The center-specific average treatment effect comparing treatments $a$ and $a^\prime$ can be identified by $\delta_{\phi}(c, a, a^\prime) =  \phi(c,a)-  \phi(c, a^\prime)$.

As noted above, in most multicenter trials treatment assignment is independent of potential outcomes, covariates, and center membership, $(Y^a,X,C) \independent A$. Under this condition, the causal quantity of interest, $\E[Y^{a} | C=c ]$, is identified by the center- and treatment-specific expectation of the observed outcome (unconditional on covariates), that is,
\begin{equation}\label{id:tau}
    \tau(c, a)= \E[Y|C=c, A=a].
\end{equation} 
The center-specific average treatment effect comparing treatments $a$ and $a^\prime$ can be identified by $\delta_{\tau}(c, a, a^\prime) =  \tau(c,a)-  \tau(c, a^\prime)$.

In the remainder of this paper, we mostly focus on results that hold under the weaker condition A2, such as the result in equation \eqref{id:phi}, rather than require the stronger condition $(Y^a,X,C) \independent A$. Working under the weaker condition means that our results apply more generally, not only to marginally randomized multicenter trials, but also to conditionally randomized multicenter trials, as well as multicenter/pooled analyses of observational studies \cite{blettner1999traditional} with no unmeasured confounding.


\subsection{Center-outcome associations are absent}

\paragraph*{The ``no center-outcome association'' condition:} Consider the following additional independence condition, which is not guaranteed to hold by study design:

\vspace{0.05in}
\noindent
\emph{A4 – No center-outcome associations given covariates and treatment:} for every treatment $a \in \mathcal A$, $Y \independent  C|X, A=a$.

Informally, condition A4 states that the distribution of outcomes does not depend on center participation, given baseline covariates and treatment. Condition A4 can fail, for instance, if there are unmeasured covariates that influence center participation and are also predictors of the outcome (after conditioning on the measured covariates $X$ and treatment). In Appendix 3, we show that condition A4 can be derived from more basic causal identifiability conditions, similar to those invoked in the emerging literature on ``generalizability'' and ``transportability'' analyses \cite{dahabreh2019generalizing, dahabreh2020extending}. The connection with this literature is not crucial for understanding the practical implications of condition A4 for the analysis of multicenter trials.

Note that condition A4 does not imply that the center-specific potential outcome means $\E[Y^a | C = c]$, or average treatment effects $\E[Y^a - Y^{a^\prime} | C = c]$, are equal across centers (i.e., homogeneous over $C$). Even if condition A4 holds, that is to say, even if the outcome is independent of center membership conditional on measured baseline covariates and treatment, the covariates, including effect modifiers and outcome predictors, can have a different distribution across centers. Thus, center-specific potential outcome means and average treatment effects can vary across centers even if condition A4 holds.

Last, note that condition A4 only involves observed variables and thus we can assess whether it is consistent with the data (see references \cite{hartman2013, lu2019causal, dahabreh2020toward, dahabreh2020benchmarking} for similar observations in different contexts). Testing for conditional independence between random variables is a hard problem, in the sense of \cite{shah2020hardness}, and we do not recommend any specific testing procedure for practical application. The performance of any procedure for testing conditional independence will depend on the law underlying the data and the amount of available data. We do note, however, that condition A4 does imply that the expectation of the outcome given covariates and treatment, does not vary across centers. This implication should be assessed in light of substantive knowledge, which can be supplemented by evaluating the following null hypothesis, for each $a \in A$:
\begin{equation*}
    \begin{split}
    H_0: \quad & \E[Y|X,C=1,A = a]  = \cdots =\E[Y|X,C=m,A=a] = \E[Y|X,A=a].
    \end{split}
\end{equation*} 
Many procedures are available for testing equality between conditional expectation functions \cite{delgado1993testing, neumeyer2003nonparametric, racine2006testing, luedtke2019omnibus}. For instance, we can use an analysis of covariance (ANCOVA) test to compare a model for the expectation of the outcome that includes the main effects of treatment and baseline covariates (and possibly their interactions) against a model for the expectation of the outcome that additionally includes center indicators (and possibly any interactions between center indicators, treatment, and baseline covariates) \cite{fleiss1986analysis}. We reiterate that the performance of any such procedure depends on the underlying data law and the amount of available data.

\paragraph*{Identification:} Under conditions A1 through A4, for $c \in \mathcal C$ and $a \in \mathcal A$, the center-specific potential outcome mean $\E[Y^a |C=c]$ is identified by the observed data functional \begin{equation}\label{id:psi}
    \psi(c, a)=\E \big[ \E[Y|X,A=a] \big| C=c \big].
\end{equation}
The inner expectation of this expression does not involve conditioning on $C$ and completely pools information across centers, among individuals assigned to treatment group $A = a$ (conditional on covariates $X$). The outer expectation marginalizes over the distribution of covariates in the target center of interest, $C=c$. Importantly, $\psi(c, a)$ identifies the center-specific potential outcome means only when all conditions, A1 through A4, hold. In that case, complete pooling of the data across centers is reasonable. In contrast, when condition A4 is not plausible or is rejected by the data, identification is still possible under conditions A1 through A3, using analyses that are appropriate in the presence of center-outcome associations. Nevertheless, as we argue in the next section, even in analyses when center-outcome associations may be present and do not completely pool the data across centers, there are opportunities to ``borrow strength'' across centers when estimating the association between covariates or treatment, and the outcome. The center-specific average treatment effect comparing treatments $a$ and $a^\prime$ can be identified by $\delta_{\psi}(c, a, a^\prime) =  \psi(c,a)-  \psi(c, a^\prime)$.

\section{Estimation and inference}

\subsection{Center-outcome associations may be present}\label{sec:present_center_association} 

In Appendix 4, we argue that the following is a reasonable estimator for $\phi(c, a)$: 
\begin{equation}\label{eq:estimatorphi2} 
    \begin{split} 
        \widehat \phi(c, a) = \dfrac{1}{ n_c} \sum\limits_{i=1}^{n} \Bigg\{ \dfrac{I(C_i=c,A_i = a)}{ \widehat e_a(X_i,C_i)} \Big\{ Y_i - \widehat g_a(X_i,C_i) \Big\} + I( C_i = c) \widehat g_a(X_i, C_i) \Bigg\}, 
    \end{split}
\end{equation} 
where $\widehat g_a(X,C)$ is an estimator for $\E[Y |X, C, A = a]$; and $\widehat e_a(X,C)$ is an estimator for $\Pr[A = a| X, C]$. Estimating the probability of treatment is not necessary, because the randomization probability is known in the trial (and each center), but estimating it with a parametric model often improves efficiency by adjusting for chance imbalances in the trial \cite{lunceford2004}. We can estimate center-specific average treatment effects comparing treatments $a$ and $a^\prime$ as $\widehat \delta_{\phi}(c,a,a^\prime) = \widehat \phi(c,a)- \widehat \phi(c, a^\prime)$.

Analysis that allow for center-outcome associations do not preclude using information across centers when estimating the probability of treatment or the conditional expectation of the outcome. For example, if we choose to use a regression of the outcome on the main effect of baseline covariates and center indicators in each treatment group, then we are using information across centers by assuming that covariate associations with the outcome are homogeneous across centers, within each treatment group. In contrast, if we choose to use a separate regression of the outcome on covariates within each treatment group and each center, then we are not allowing for any pooling of information across centers. Assuming that at least some covariate effects are homogeneous across centers may be necessary in trials when some centers have a small number of enrolled individuals.

Adjusted analyses where center-outcome associations may be present (as in \eqref{eq:estimatorphi2}) are usually more efficient than crude analyses that do not use covariate information,
\begin{equation}\label{eq:tau} 
    \begin{split} \widehat \tau(c, a) &= \dfrac{1}{ n_{c,a} } \sum\limits_{i=1}^{n} I( C_i=c, A_i=a) Y_i, 
    \end{split}
\end{equation} 
where $n_{c,a}=\sum_{i=1}^{n} I(C_i=c, A_i=a)$. 
It should also be possible to obtain an estimator based on  $\widehat \phi(c,a)$ that is guaranteed to be more efficient than the unadjusted estimator \cite{colantuoni2015leveraging}. Such crude analyses are reasonable only when treatment assignment does not depend on covariates, at least conditional on center. Clearly, when crude analyses are reasonable, we can estimate center-specific average treatment effects comparing treatments  $a$ and $a^\prime$ as $\widehat \delta_{\tau}(c,a,a^\prime) = \widehat \tau(c, a)- \widehat \tau(c, a^\prime)$.

\subsection{Center-outcome associations are absent}\label{sec:absent_center_association}

Suppose now that background knowledge suggests that the outcome is independent of center, conditional on covariates and treatment (i.e., assumption A4 is plausible) and the data does not provide evidence against it, such that the identification result in equation \eqref{id:psi} is valid, and $\psi(c, a)$ identifies the center-specific treatment effects. In Appendix 4, we argue that the following is a reasonable estimator of $\psi(c, a)$:
\begin{equation}\label{eq:estimatorpsi}
  \widehat \psi(c, a) = \dfrac{1}{ n_c } \sum\limits_{i=1}^{n} \Bigg\{ I(A_i = a) \dfrac{\widehat p_{c}(X_i)}{ \widetilde e_a(X_i)}  \Big\{ Y_i - \widetilde g_a(X_i) \Big\} + I(C_i = c) \widetilde g_a(X_i) \Bigg\},
\end{equation}
where $\widetilde g_a(X)$ is an estimator for $\E[Y | X,A = a]$; $\widehat p_{c}(X)$ is an estimator for $\Pr[C = c| X]$; and $\widetilde e_a(X)$ is an estimator for $\Pr[A = a| X]$. In the vast majority of multicenter trials, the treatment assignment mechanism is common across centers, that is $\Pr[A = a| X, C] = \Pr[A = a| X]$, and known to the investigators, but estimating it with a simple parametric model often improves efficiency and adjusts for chance imbalances in the trial \cite{lunceford2004}. When the treatment assignment mechanism varies across centers estimating $\Pr[A = a| X]$ is more subtle; we discuss issues related to modeling strategies in Section \ref{sec:modeling}. We can estimate center-specific average treatment effects comparing treatments  $a$ and $a^\prime$ as $\widehat \delta_{\psi}(c,a,a^\prime) = \widehat \psi(c, a)- \widehat \psi(c, a^\prime)$.

Note that the potential outcome mean estimators $\widehat \phi(c,a)$ and $\widehat \psi(c,a)$ (and the corresponding treatment effect estimators) differ in two important ways: First, they rely on estimation of different nuisance functions. In particular, $\widehat \phi(c,a)$ uses estimation of nuisance functions that are stratified by center, while $\widehat \psi(c,a)$ does not. Furthermore, $\widehat \psi(c,a)$ requires estimation of the probability of center membership, while $\widehat \phi(c,a)$ does not. Second, the two estimators average over different parts of the available data: $\widehat \phi(c,a)$ only sums contributions from the center of interest because all terms in the summand are multiplied by the indicator $I(C = c)$; in contrast, $\widehat \psi(c,a)$ sums the terms $I(A_i = a) \dfrac{\widehat p_{c}(X_i)}{ \widehat e_a(X_i)} \big\{ Y_i - \widehat g_a(X_i) \big\}$ over all centers.

In Appendix 5, we describe additional g-formula and weighting estimators; these estimators can be viewed as special cases of the estimators presented above \cite{dahabreh2019relation} and are not discussed further.

\subsection{Modeling}\label{sec:modeling}

The estimators in the previous section require the estimation of nuisance functions for the expectation of the outcome, in both $\widehat \phi(c,a)$ and $\widehat \psi(c,a)$; the probability of center membership, in $\widehat \psi(c,a)$; and possibly the probability of treatment, in both $\widehat \phi(c,a)$ and $\widehat \psi(c,a)$. We now discuss the estimation of these nuisance functions. 

As we show in Appendix 6, $\widehat \phi(c, a)$ and $\widehat \psi(c, a)$ are robust to certain kinds of misspecification when using models for the nuisance functions. First, consider $\widehat \phi(c, a)$. When at least one of the models for the probability of treatment, $\Pr[A=a|X,C]$ or the outcome, $\E[Y|X,C, A = a]$ is correctly specified, $ \widehat \phi(c, a)$ is consistent. Because the model for the probability of treatment can always be correctly specified, it follows that $\widehat \phi(c, a)$ is robust to misspecification of the outcome model. Informally, use of the outcome models when estimating $\phi(c, a)$ is primarily geared toward improving efficiency. Next, consider $\widehat \psi(c, a)$. This estimator is doubly robust \cite{bang2005} in the sense that it is consistent when either the models for center membership $\Pr[C=c|X]$ and the probability of treatment $\Pr[A = a | X]$ are correctly specified, or the model for the expectation of the outcome $\E[Y|X,A=a]$ is correctly specified (but not necessarily all three models). In fact, as we discuss below, in multicenter trials, correct specification of a model for $\Pr[A = a | X]$ is always possible when the model for $\Pr[C=c|X]$ is correctly specified. 

The conditional expectation of the outcome in each treatment group, given baseline covariates and center membership, $\E[Y|X,C, A = a]$, or just given covariates, $\E[Y | X, A = a]$, is never known and has to modeled. Furthermore, modeling $\Pr[A = a | X, C]$ to estimate $\widehat e_a(X, C)$ for use in $\widehat \phi(c,a)$ is straightforward even when the assignment mechanism varies across centers, possibly conditional on covariates. 

Issues related to estimating the probability of treatment across centers, that is $\Pr[A = a |X]$, as required for $\widehat \psi(c,a)$, are a bit more subtle, even though in multicenter trials, the probability of treatment in each center is under the investigators' control. First, consider the most common case where the treatment assignment mechanism is the same across centers and does not depend on covariates. In this common case, $\Pr[A = a | X] = \Pr[A = a]$, and it should be easy to specify a model for the relationship between treatment $A$ and covariates $X$ across centers because any reasonable specification is correct. Next, consider the case where the treatment assignment mechanism is not the same across centers, and may depend on covariates. In this case, we need to model the relationship between treatment and covariates using the pooled data. Alternatively, using the fact that $\Pr[A=a|X] = \sum_{c \in \mathcal C} \Pr[A=a|X, C = c] \Pr[C = c | X]$, we can exploit the known by design center-specific probabilities of treatment and rely instead on modeling the probability of center participation. Specifically, we can write $$\dfrac{\widehat p_{c}(X)}{ \widetilde e_a(X)} = \dfrac{\widehat p_{c}(X)}{\sum_{c^\prime \in \mathcal C}  \widetilde e_{a,c^\prime}(X) \widehat p_{c^\prime}(X)},$$ where $\widetilde e_{a,c^\prime}(X)$ is an estimator for $\Pr[A = a | X, C =c^\prime]$ for each $c^\prime \in \mathcal C$. In multicenter trials the center-specific probability of treatment is known to the investigators; and in most cases the probability does not depend on covariates (only center), that is, $\Pr[A = a | X, C =c^\prime] = \Pr[A = a | C =c^\prime]$. Thus, estimation of the $\widetilde e_{a,c^\prime}(X)$ terms requires no modeling assumptions and the only challenge is modeling the probability of center participation, which is used in the $\widehat p_{c}(X)$ and $\widehat p_{c^\prime}(X)$ terms. The displayed equation above highlights that in a multicenter trial, when using $\widehat \psi(c,a)$, double robustness depends on correct specification of a model for the probability of center membership $\Pr[C=c|X]$ or the expectation of the outcome $\E[Y | X, A = a]$, but not on correct specification of a model for $\Pr[A = a | X]$, because the latter can be written in a way that involves the known center-specific probabilities of treatment $\Pr[A = a | X, C = c]$ and the probability of center membership. For pooled analyses of observational studies, we expect the treatment assignment mechanism to vary and be unknown across studies making it necessary to use modeling in order to estimate $\Pr[A = a | X, C = c]$ for each study $c \in \mathcal C$; consequently, estimating $\Pr[A=a|X]$ in pooled analyses of observational studies can prove particularly challenging.

The most common approach for estimating the expectation of the outcome, the probability of center membership, and (if needed) the probability of treatment, is to posit parametric (finite dimensional) models, often modeling center membership using fixed or random effects \cite{fleiss1986analysis, gelman2006data, senn1998some}. Yet, substantive knowledge typically is not sharp enough to ensure that the models for the expectation of the outcome and the probability of participation are correctly specified. This issue is most pressing for the estimator $\widehat \psi(c, a)$ which requires correct specification of at least one of the models for the probability of center membership $\Pr[C=c|X]$ or the expectation of the outcome $\E[Y|X,A=a]$, and does not enjoy the full robustness of $\widehat \phi(c, a)$. Thus, when estimating the nuisance functions, we may want to use flexible (data-adaptive) machine learning methods to reduce the risk of model misspecification. Machine learning estimators of the nuisance functions typically have rates of convergence that are slower than $\sqrt{n}$, resulting in estimators of causal quantities that are not $\sqrt{n}$-consistent (e.g., this is true for the g-formula and weighting estimators that we give in Appendix 5). Still, because $\widehat \phi(c, a)$ and $\widehat \psi(c, a)$ rely on the efficient influence function for their corresponding functional (shown in Appendix 6), they will remain $\sqrt{n}$-consistent when using a wide array of machine learning methods that may converge slower than a $\sqrt{n}$-rate \cite{chernozhukov2017double}. In addition, the estimators can be easily modified to use sample splitting and cross-fitting approaches for double/debiased machine learning \cite{chernozhukov2017double}, in order to further  weaken the conditions required for the estimators to be $\sqrt{n}$-consistent.

\subsection{Inference}
To construct confidence intervals for the estimators $\widehat \phi(c, a)$ and $\widehat \psi(c, a)$ (and the corresponding average treatment effect estimators), we can use the influence curve based estimator of the standard error \cite{vanderLaan2003}, which we describe further in Appendix 7. When using parametric models, a better approach may be to use the sandwich estimator \cite{stefanski2002} or resampling-based methods (e.g., jackknife or bootstrap) \cite{efron1994introduction} to obtain standard errors for estimators of potential outcome means or treatment effects.

\subsection{Efficiency considerations}
In Appendix 6, we show that the asymptotic variance bound for regular estimators of $\psi(c,a)$ is less than or equal to the asymptotic variance bound for regular estimators of $\phi(c,a)$. To understand the practical implications of this result consider two groups of investigators that differ in their willingness to rely on assumption A4: the first group finds Assumption A4 plausible (and consistent with the data) and is thus willing to rely on assumptions A1 through A4; this group can use the estimator $\widehat \psi(c,a)$. The second group is only willing to rely on assumptions A1 through A3 and can use the estimator $\widehat \phi(c,a)$. If the investigators in the first group are correct and condition A4 holds, then, in large samples, they will generally be able to draw more precise inferences than the investigators in the second group -- provided both groups are able to specify approximately correct models for the nuisance functions required by their respective estimators and estimate those models at sufficiently fast rates, in order for the estimators to attain the asymptotic variance bound. In Appendix 8, we describe the relationship between our estimators and the estimators used in transportability analyses, and show that the latter may be less efficient.

\section{Assessing homogeneity of center-specific treatment effects}

An important consideration in multicenter trials is whether the center-specific treatment effects comparing treatments $a$ and $a^\prime$ are constant across centers, that is, whether the following null hypothesis holds:  
$$H_0: \quad \E[Y^a - Y^{a^\prime} |C=1]=\E[Y^a - Y^{a^\prime} |C=2]= \ldots = \E[Y^{a} - Y^{a^\prime} |C=m].$$
If the trial is marginally or conditionally randomized on $C$, we can rewrite the hypothesis as:
\begin{equation*}
H_0: \quad \delta_{\tau}(1,a,a^\prime) = \delta_{\tau}(2,a,a^\prime) = \ldots  = \delta_{\tau}(m,a,a^\prime).
\end{equation*}
We can assess this hypothesis, for example, using ANOVA to compare a model with product terms between treatment and the center indicators and a model without these product terms. 

In order to use covariate information, we can exploit conditions A1 through A3, to rewrite the null hypothesis as follows:
\begin{equation*}
H_0: \quad \delta_{\phi}(1,a,a^\prime) = \delta_{\phi}(2,a,a^\prime) = \ldots  = \delta_{\phi}(m,a,a^\prime).
\end{equation*}
We can assess this hypothesis using the estimator $\widehat \delta_{\phi}(c,a,a^\prime)$, for each $c \in \mathcal C$, to estimate each of the contrasts involved and assess their equality.

Furthermore, if condition A4 also holds, then the null hypothesis can be written as follows:
\begin{equation*}
H_0: \quad \delta_{\psi}(1,a,a^\prime) = \delta_{\psi}(2,a,a^\prime) = \ldots  = \delta_{\psi}(m,a,a^\prime).
\end{equation*}
We can assess this hypothesis using the estimator $\widehat \delta_{\psi}(c,a,a^\prime)$, for each $c \in \mathcal C$,  to estimate each of the contrasts involved and assess their equality. The last two forms of the null hypothesis can be used to obtain a valid test of homogeneity for conditionally randomized trials and for multicenter/pooled observational studies with no unmeasured confounding. As always, the performance of statistical tests depends on the law underlying the data and the amount of data available.

\section{Simulation study}

We ran a simulation study with 1000 runs to evaluate the finite sample performance of different methods for estimating center-specific average treatment effects.

\paragraph{Data generation:} The simulation study was motivated by the empirical example, the HALT-C trial \cite{di2008prolonged}, that we use in the next section. Specifically, we fitted outcome (platelets at 9 months of follow-up) and center membership models in the trial, conditional on standardized versions of three baseline covariates (platelets, age, white blood cell count) to inform the choice of generative models for the simulation study. Furthermore, our chosen sample size is similar to that of the trial. 

In each simulation, we generated three independent standard normal covariates, $X_j$, $j=1,2,3$ for 1000 observations, similar to the HALT-C trial size. Next, we allocated individuals to one of ten possible centers in the trial using a multinomial logistic model, $$ C | X  \sim \text{Multinomial}\left(p_1, p_2,\cdots, p_{10}; \sum\limits_{c=1}^{10}n_c \right), $$ with
\begin{equation*}
    \begin{split}
p_1 &= \Pr[C= 1 | X] = 1 - \sum_{k=2}^{10} p_{k}, \mbox{ and }  \\
p_k &=\Pr[C = k| X] = \dfrac{\text{exp}(\beta_{k} X^T) }{1+ \sum_{j=1}^{10} \exp(\beta_{j}X^T)}, \mbox{ for } k =2,\ldots,10,
    \end{split}
\end{equation*} 
where $X=(1,X_1, X_2,X_3)$ and $\beta_k= (\beta_{k,0}, \ldots, \beta_{k,3})$ is the vector of regression coefficients for center $k$. We provide the values for the coefficients in Appendix 9. We generated treatment assignment $A$ as a binary random variable with a probability of 0.50. We simulated outcomes from the model $Y = 161 + 62 X_1 - X_2 - X_3 - 43 A - 21 X_1 \times A + e$, where $e$ follows a normal distribution with mean 0 and a standard deviation of 36. To evaluate the effect of stronger selection and effect heterogeneity, we considered a scenario where we doubled the coefficient for the interaction between $X_1$ and $A$ in the outcome model and doubled the coefficients for $X_1$ in the multinomial model for center membership.

\paragraph{Estimators and performance metrics:} We implemented the treatment effect estimators described in Sections \ref{sec:present_center_association} and \ref{sec:absent_center_association}. For comparison, we also considered three different estimators that are commonly used in practice: 1) the coefficient of treatment from an ordinary least squares regression with treatment as the only regressor, completely pooling data across all centers (``pooled''); 2) the coefficient of treatment from an ordinary least squares regression with treatment and fixed center effects as regressors (``FE1''); 3) the coefficient of treatment from an ordinary least squares regression with treatment, covariates ($X_1, X_2, X_3$), and fixed center effects as regressors (``FE2'').

We examined the bias and mean squared error (MSE) of the above estimators. Bias was defined as the difference between the estimated effect using each of the methods and the true center-specific average treatment effect in each center (the latter was obtained numerically by averaging the true conditional expectation over 10 million observations). We also examined coverage for Wald-style confidence intervals centered around $\widehat \delta_{\tau}(c,1,0)$, $\widehat \delta_{\phi}(c,1,0)$, and $\widehat \delta_{\psi}(c,1,0)$. For $\widehat \delta_{\tau}(c,1,0)$, we used the standard error of the coefficient for treatment from the linear regression output; for $\widehat \delta_{\phi}(c,1,0)$ and $\widehat \delta_{\psi}(c,1,0)$ we used influence curve based estimates of the standard error. For comparison, we also calculated the estimated standard error from the sandwich error.

\paragraph{Simulation results:} In the main text, we report the results of the scenario with stronger selection and effect heterogeneity. Bias estimates are reported in Table \ref{Table:sim_bias} and MSE estimates are reported in Table \ref{Table:sim_MSE}. The proposed estimators that explicitly target center-specific average treatment effects -- $\widehat \delta_{\tau}(c,1,0)$, $\widehat \delta_{\phi}(c,1,0)$, and $\widehat \delta_{\psi}(c,1,0)$ -- had nearly no bias. The MSE of $\widehat \delta_{\phi}(c,1,0)$ was approximately half and the MSE of $\widehat \delta_{\psi}(c,1,0)$ was approximately one fourth of the MSE of $\widehat \delta_{\tau}(c,1,0)$. Appendix 9 reports coverage, average confidence interval width, and average standard error results from the simulation. The Wald-style intervals centered around our estimators had (nearly) nominal coverage, despite the relatively small sample size of each center. We found that $\widehat \delta_{\psi}(c,1,0)$ always had the smallest average standard error and average confidence width, followed by $\widehat \delta_{\phi}(c,1,0)$; $\widehat \delta_{\tau}(c,1,0)$ had a much larger standard error and wider confidence intervals. In Appendix 9 we compare the average of the estimated standard errors using the influence curve approach or the sandwich estimator of the variance against the estimated standard deviation of the estimators. The influence curve based estimated standard errors were almost identical to those from the sandwich estimators; the averages of the standard error estimates from these two approaches were approximately equal to the estimated standard deviation of the estimator over 1000 runs of the simulation. 

In comparison, the alternative estimators -- pooled, FE1, and FE2 --  often had substantial bias, except for centers that happened to have center-specific average treatment effects close to the underlying parameter of these procedures (e.g., for $c=4$ in our simulation). These estimators had similar MSE values between them. Often but not always, our proposed estimators had lower MSE compared to these alternative estimators. In fact, for some centers (e.g., for $c=1$ or $c=8$), the pooled, FE1, and FE2 estimators had much higher MSE than $\widehat \delta_{\phi}(c,1,0)$ or $\widehat \delta_{\psi}(c,1,0)$. The pooled, FE1, and FE2 estimators had lower MSE compared to $\widehat \delta_{\phi}(c,1,0)$ or $\widehat \delta_{\psi}(c,1,0)$ for centers that happened to have center-specific average treatment effects close to the underlying parameter of these procedures (e.g., for $c=4$ ). Additional simulation results with a lower magnitude of selection and effect heterogeneity are also reported in Appendix 9.

\section{The HALT-C trial}

\paragraph{Study design and data:} The Hepatitis C Antiviral Long-Term Treatment Against Cirrhosis (HALT-C) trial included 10 centers and enrolled 1050 individuals with chronic hepatitis C and advanced fibrosis who had not responded to previous therapy and randomized them to treatment with peginterferon alfa-2a ($a=1$) versus no treatment ($a=0)$. We used platelet count at 9 months of follow-up as the outcome. We considered the following baseline covariates: baseline platelet count, age, sex, previous use of pegylated interferon, race, white blood cell count, history of injected recreational drugs, ever receiving a transfusion, body mass index, creatine levels, smoking status,  previous use of combination therapy (interferon and ribavirin), diabetes, serum ferritin, hemoglobin, aspartate aminotransferase, ultrasound evidence of splenomegaly, and ever drank. We restricted our analyses to individuals with complete covariate and outcome data ($n=948$). Table 4 summarizes covariate information over all centers, stratified by treatment arm; Tables 1 and 2 in Appendix 10 summarize covariate information stratified by center. 

\paragraph{Estimation, model specification, inference:} We applied the estimators $\widehat \delta_{\tau}(c,1,0)$, $\widehat \delta_{\phi}(c,1,0)$, and $\widehat \delta_{\psi}(c,1,0)$ to the HALT-C data, for $c = 1, \ldots, 10$, to estimate the average treatment effect in the target population underlying each of the participating centers. In this section of the paper, we denote the these groups of estimators as $\widehat \delta_{\tau}$, $\widehat \delta_{\phi}$, and $\widehat \delta_{\psi}$ when no confusion arises by suppressing the indexing by center (always the same 10 centers are considered) and treatments (always binary). For comparison, we also applied the pooled, FE1, and FE2 estimators, described in the simulation study. The analyses we report here are meant to illustrate the methods and should not be clinically interpreted.

To use $\widehat \delta_{\phi}$ and $\widehat \delta_{\psi}$, we modeled the expectation of the outcome with linear regression; the probability of center membership with multinomial logistic regression; and the probability of treatment with binary logistic regression. These models included main effects of all the baseline covariates listed above (models for the nuisance functions needed for $\widehat \delta_{\phi}$ included center fixed effects). We obtained 95\% Wald-style confidence intervals using the sandwich estimator of the variance \cite{stefanski2002, saul2017}.

As a stability analysis, we used generalized additive models (GAMs) to estimate models for the outcome and center membership, using the main effects of all baseline covariates. In these analyses, we always modeled the probability of treatment using a (correctly specified) parametric binary logistic regression. When using GAMs, we obtained bootstrap normal distribution-based confidence intervals using 1000 bootstrap samples.

\paragraph{Assessing the presence of center-outcome associations:} To evaluate whether center-outcome associations are present, we used ANCOVA to compare the expectation of the outcome in a linear regression model that included the main effects of baseline covariates and treatment and all possible baseline covariates and treatment products, against a linear regression model that additionally included the main effect of the center indicators and all product terms between baseline covariates, treatment, and center indicators. 

\paragraph{Assessing homogeneity of treatment effects across centers:} To evaluate whether homogeneity of treatment effects across centers holds, we applied $\widehat \delta_{\phi}$ and $\widehat \delta_{\psi}$ and used omnibus Wald chi-square tests for assessing whether the center-specific treatment effects were homogeneous across centers.

\paragraph{Results:} We did not find evidence to reject the null hypothesis of no center-outcome associations, conditional on baseline covariates and treatment (ANCOVA p-value = 0.42), and we concluded that it was reasonable to use the estimator $\widehat \delta_{\psi}$. Figure \ref{fig:forest_plot_main} shows a forest plot comparing average treatment effects in each center: white squares depict the crude analysis using $\widehat \delta_{\tau}$; grey squares depict the adjusted analysis using $\widehat \delta_{\phi}$; and black squares depict the adjusted analysis using $\widehat \delta_{\psi}$. Estimates from $\widehat \delta_{\psi}$ were the most precise, followed by those from $\widehat \delta_{\phi}$, and those from $\widehat \delta_{\tau}$ were the least precise. Numerical results are presented in Table \ref{Table:ATE_main}; results from analyses using generalized additive models, presented in Appendix 10, were qualitatively similar. The pooled, FE1, and FE2 estimates (95\% confidence interval) were -42.3 (-50.5, -34.0), -42.3 (-50.5, -34.0), and -42.4 (-47.3, -37.5), respectively. We did not find evidence to reject the null hypothesis of homogeneity of treatment effects across centers in the analyses using $\widehat \delta_{\phi}$ or $\widehat \delta_{\psi})$ (Wald test p-value, 0.51 and 0.46, respectively).

\section{Discussion}

New infrastructures for data collection will make large multicenter trials more common in the future. For example, the Clinical Trials Transformation Initiative, a public-private partnership between the FDA and 60 other organizations, aims to encourage the conduct of large, simple randomized trials (``megatrials'') \cite{eapen2014imperative}. Embedding multicenter trials in learning healthcare systems \cite{fiore2016} and using new data collection strategies (e.g., electronic health record data capture) substantially reduces the cost of running trials and improves the efficiency of data collection from diverse populations \cite{olsen2007learning}.

Increasing diversity in the data can support analyses that are more broadly applicable, but requires careful specification of the inferential target population to avoid producing average results that do not apply to any real-world setting. Here, we focused on the case where the inferential target populations are the population underlying the centers in a multicenter randomized trial. In this setting, we described methods for reinterpreting the evidence produced by a multicenter trial to obtain center-specific effects. Our approach relies on fixed effects models for a finite collection of centers, avoiding assumptions about sampling from an infinite population of centers.

The methods we propose are most useful when recruitment from the population underlying each center is not selective, such that the center-specific sample of randomized individuals represents a meaningful target population. For example, in multicenter social experiments \cite{orr1999social, bell2016social} or large-scale randomized clinical trials embedded in healthcare systems \cite{fiore2016}, participants can reasonably be viewed as a random sample of some underlying center-specific population. When recruitment into each center is selective, center-specific samples may not reflect the center-specific underlying populations of trial-eligible individuals, and the methods we propose here may be most useful for identifying differences among the centers themselves. For example, the methods may be applied to detect outlying centers (with possible applications for fraud detection), or to explore whether center-specific practices, such as ancillary non-randomized interventions and local policies \cite{orr2019using}, correlate with the magnitude of treatment effects \cite{senn2008statistical}. Alternatively, if a sample from a well-defined target population can be obtained separately from the multicenter trial sample, investigators can also consider transportability methods for meta-analyses, which can be easily modified for use with multicenter trial data, but require additional untestable assumptions about the exchangeability of trial participants and members of the target population \cite{dahabreh2020toward, dahabreh2019efficient}. 

In summary, we described robust and efficient methods for reinterpreting the evidence produced by a multicenter trial in the context of the population underlying each participating center. The methods allow investigators to use the totality of the evidence in the trial for inference about center-specific treatment effects, and to assess whether these effects are heterogeneous across centers. The methods may be useful additions to standard analytical approaches for multicenter trials, when treatment effect modifiers have a different distribution across centers.

\section*{Acknowledgements}
This work was supported in part by Agency for Healthcare Research and Quality (AHRQ) award R36HS028373-01 (Robertson) and by Patient-Centered Outcomes Research Institute (PCORI) award ME-1502-27794 (Dahabreh).  \\
The data analyses in our paper used HALT-C data obtained from the National Institute of Diabetes and Digestive and Kidney Diseases (NIDDK) Central Repositories. The HALT-C trial data are not publicly available, but they can be obtained from the National Institute of Diabetes and Digestive and Kidney Diseases (NIDDK) Central Repository (\url{https://repository.niddk.nih.gov/studies/halt-c/}).
\\
The content of this paper is solely the responsibility of the authors and does not necessarily represent the views of AHRQ, PCORI, the PCORI Board of Governors,the PCORI Methodology Committee, the HALT-C study, the NIDDK Central Repositories, or the NIDDK.


\newpage
\clearpage
\bibliographystyle{ieeetr}
\bibliography{ref_multicenter}

\newpage
\section{Tables and Figures}

\begin{table}[!ht]
	\renewcommand{\arraystretch}{1.3}
	\centering
\caption{Data structure for treatment, $A$, observed outcome, $Y,$ and baseline covariates, $X$, in a multicenter trial. The observed data include $(X, C, A, Y)$ information from $n$ trial participants, of whom $n_c$ are in center $c \in \mathcal{C}$.}\label{table_data_structure}
\begin{tabular}{|ccccc|}
\hline
Unit, $i$                               & $X_i$               & $C_i$      & $A_i$               & $Y_i$               \\ \hline 
\multicolumn{1}{|c}{$1 $}          & $X_{1}$           & 1        & $A_{1}$           & $Y_{1} $          \\
\multicolumn{1}{|c}{$\vdots$}      & $\vdots$          & $\vdots$        & $\vdots$          & $\vdots$          \\
\multicolumn{1}{|c}{$n_1 $}        & $X_{n_1}$         & 1        & $A_{n_1}$         & $Y_{n_1} $        \\ \hline
\multicolumn{1}{|c}{$n_1 + 1 $}      & $X_{n_1 + 1 }$       & 2        & $A_{n_1 + 1 }$       & $Y_{n_1 + 1} $      \\
\multicolumn{1}{|c}{$\vdots$}      & $\vdots$          & $\vdots$        & $\vdots$          & $\vdots$          \\
\multicolumn{1}{|c}{$n_1 + n_2$}        & $X_{n_1 + n_2}$         & 2        & $A_{n_1 + n_2}$         & $Y_{n_1 + n_2} $        \\ \hline
\multicolumn{1}{c}{$\vdots$}      & $\vdots$          & $\vdots$ & $\vdots$          & \multicolumn{1}{c}{$\vdots$}          \\ \hline 
\multicolumn{1}{|c}{$\sum_{c=1}^{m-1} n_c + 1$} & $X_{\sum_{c=1}^{m-1} n_c + 1}$ & $m$      & $A_{\sum_{c=1}^{m-1} n_c + 1}$ & $Y_{\sum_{c=1}^{m-1} n_c + 1}$ \\
\multicolumn{1}{|c}{$\vdots$}      & $\vdots$          & $\vdots$ & $\vdots$          & $\vdots$          \\
$\sum_{c=1}^{m-1} n_c + n_m = \sum_{c=1}^{m} n_c = n$           & $X_{n}$           & $m$      & $A_{n}$           & $Y_{n}$           \\ \hline
\end{tabular}
\end{table}

\begin{table}[ht]
\centering
 \caption{Bias results from the simulation for each method.}
\label{Table:sim_bias}
\begin{tabular}{@{}cccccccc@{}}
\toprule
Center, $c$ & Average $n_c$ & $\widehat \delta_{\tau}(c,1,0)$ & $\widehat \delta_{\phi}(c,1,0)$ & $\widehat \delta_{\psi}(c,1,0)$  & Pooled & FE1 & FE2  \\ \midrule
1 & 57 & -0.31 & -0.25 & -0.16 & 22.60 & 22.61 & 22.58 \\
2 & 100 & 0.41 & 0.24 & -0.15 & -5.10 & -5.09 & -5.12 \\
3 & 135 & -0.47 & -0.38 & -0.16 & 8.83 & 8.84 & 8.81 \\
4 & 68 & -0.04 & -0.24 & -0.13 & -1.92 & -1.91 & -1.94 \\
5 & 80 & -0.53 & -0.12 & -0.14 & 12.97 & 12.98 & 12.95 \\
6 & 107 & -0.42 & -0.18 & -0.13 & -13.57 & -13.56 & -13.60 \\
7 & 94 & 0.43 & 0.36 & -0.18 & -9.76 & -9.75 & -9.78 \\
8 & 110 & -0.04 & -0.16 & -0.14 & -17.57 & -17.56 & -17.59 \\
9 & 43 & -0.67 & -0.60 & -0.13 & -8.12 & -8.11 & -8.14 \\
10 & 206 & 0.03 & -0.26 & -0.14 & 8.02 & 8.03 & 7.99 \\ \bottomrule
\end{tabular}
\caption*{Average $n_c$ is the average sample size in center $c$ across simulations. See the text for the definitions of the pooled, FE1, and FE2 estimators.}
\end{table}

\begin{table}[ht]
\centering
 \caption{MSE results from the simulation for each method.}
\label{Table:sim_MSE}
\begin{tabular}{@{}cccccccc@{}}
\toprule
Center, $c$ & Average $n_c$ & $\widehat \delta_{\tau}(c,1,0)$ & $\widehat \delta_{\phi}(c,1,0)$ & $\widehat \delta_{\psi}(c,1,0)$  & Pooled & FE1 & FE2  \\ \midrule
1 & 57 & 218.01 & 116.98 & 35.97 & 524.42 & 524.78 & 516.55 \\
2 & 100 & 127.63 & 64.19 & 22.51 & 39.58 & 39.40 & 32.94 \\
3 & 135 & 97.58 & 53.03 & 19.13 & 91.55 & 91.64 & 84.29 \\
4 & 68 & 199.89 & 105.11 & 31.35 & 17.25 & 17.14 & 10.47 \\
5 & 80 & 162.56 & 88.99 & 27.06 & 181.87 & 182.04 & 174.43 \\
6 & 107 & 116.04 & 61.40 & 21.29 & 197.81 & 197.47 & 191.54 \\
7 & 94 & 140.35 & 75.18 & 24.11 & 108.84 & 108.57 & 102.40 \\
8 & 110 & 121.16 & 60.01 & 20.88 & 322.19 & 321.77 & 316.10 \\
9 & 43 & 327.29 & 156.03 & 40.93 & 79.46 & 79.22 & 72.95 \\
10 & 206 & 60.72 & 32.17 & 13.12 & 77.82 & 77.90 & 70.60 \\ \bottomrule
\end{tabular}
\caption*{Average $n_c$ is the average sample size in center $c$ across simulations. See the text for the definitions of the pooled, FE1, and FE2 estimators.}
\end{table}

\clearpage
\begin{table}[!ht]
\centering
 \caption{Baseline characteristics in the HALT-C trial, stratified by treatment assignment.}
    \label{Table:haltc_baseline}
\begin{tabular}{@{}lcc@{}}
\toprule
                                                      & $A=1$           & $A=0$           \\ \midrule
Number of individuals                                 & 468             & 480             \\
Baseline platelets, $\times$ 1000/mm$^3$              & 165.5 (62.5)  & 164.7 (68.1)  \\
Age in years                                          & 51.2 (7.4)    & 50.0 (7.2)    \\
Female                                                & 138 (29.5)      & 133 (27.7\%)    \\
Received pegylated interferon                         & 129 (27.6\%)    & 143 (29.8\%)    \\
White                                                 & 335 (71.6\%)    & 340 (70.8\%)    \\
Baseline white blood cell count, $\times$ 1000/mm$^3$ & 5.9 (1.9)     & 5.6 (1.8)     \\
Used recreational drugs                               & 221 (47.2\%)    & 208 (43.3\%)    \\
Received a transfusion                                & 195 (41.7\%)    & 180 (37.5\%)    \\
Body mass index, weight (kg)/height(m)$^2$            & 29.8 (5.3)    & 30.1 (5.7)    \\
Creatinine, mg/dl                                     & 0.9 (0.2)     & 0.9 (0.2)     \\
Ever smoked                                           & 350 (74.8\%)    & 363 (75.6\%)    \\
Received interferon and ribavirin                     & 389 (83.1\%)    & 389 (81.0\%)    \\
Reported diabetes                                     & 85 (18.2\%)     & 81 (16.9\%)     \\
Serum ferritin, ng/ml                                 & 361.8 (426.9) & 388.6 (433.1) \\
Ultrasound evidence of splenomegaly                   & 156 (33.3\%)    & 163 (34.0\%)    \\
Ever drank alcohol                                    & 388 (82.9\%)    & 396 (82.5\%)    \\
Hemoglobin, g/dl                                      & 15.0 (1.4)    & 15.0 (1.4)    \\
Aspartate aminotransferase, U/L                       & 88.2 (59.7)   & 88.0 (58.8)   \\ \bottomrule
\end{tabular}
\caption*{Results reported as mean (standard deviation) for continuous variables and count (percentage) for binary variables.\\
$A$, indicates randomization to treatment with peginterferon alfa-2a,  ($A=1$) versus no treatment ($A=0$); kg, kilogram; m, meter; mg, milligram; dl, deciliter; ml, milliliter; g, gram; U/L, units per liter.}
\end{table}

\clearpage
\begin{table}[!ht]
\centering
 \caption{Analysis of the HALT-C trial.} \label{Table:ATE_main}
\begin{tabular}{ccp{.2\textwidth}p{.2\textwidth}p{.2\textwidth}}
\toprule
Center, $c$ & $n_c$ & \hfil$\widehat \delta_{\tau}(c,1,0)$ & \hfil$\widehat \delta_{\phi}(c,1,0)$ & \hfil$\widehat \delta_{\psi}(c,1,0)$  \\ \midrule
1 & 48 & \hfil -29.3 (11.5, -70) \par \hfil {[}20.8{]} & -46.9 (-21.8, -71.9) \par \hfil {[}12.8{]} & \hfil -46 (-35.6, -56.4) \par \hfil  {[}5.3{]} \\
2 & 97 & -42.9 (-16.9, -69) \par \hfil {[}13.3{]} & -40.5 (-28.4, -52.5) \par \hfil  {[}6.1{]} & \hfil -42.9 (-35.9, -49.8) \par \hfil  {[}3.5{]} \\
3 & 130 & -44.2 (-20.9, -67.5) \par \hfil {[}11.9{]} & -35.3 (-24, -46.7) \par \hfil  {[}5.8{]} & \hfil -44.4 (-37.3, -51.5) \par \hfil  {[}3.6{]} \\
4 & 66 & -21 (6.3, -48.3) \par \hfil  {[}13.9{]} & -34.2 (-16.2, -52.3) \par \hfil {[}9.2{]} & \hfil -39.4 (-31.5, -47.4) \par \hfil  {[}4.1{]} \\
5 & 76 & -70.9 (-44.6, -97.3) \par \hfil {[}13.4{]} & -58.3 (-43.1, -73.4) \par \hfil  {[}7.7{]} &  \hfil -45.5 (-37, -53.9) \par \hfil  {[}4.3{]} \\
6 & 101 & -35.6 (-10.7, -60.5) \par \hfil {[}12.7{]} & -35.3 (-19.3, -51.2) \par \hfil  {[}8.2{]} & \hfil -39.4 (-32.8, -46.1) \par \hfil  {[}3.4{]} \\
7 & 89 & -42.8 (-12, -73.6) \par \hfil {[}15.7{]} & -48.1 (-30.6, -65.6) \par \hfil  {[}8.9{]} & \hfil -43.9 (-36.7, -51.1) \par \hfil  {[}3.7{]} \\
8 & 100 & -49.7 (-26.6, -72.9) \par \hfil  {[}11.8{]} & -40.7 (-26.4, -55.1) \par \hfil  {[}7.3{]} & \hfil -37.3 (-30.9, -43.7) \par \hfil  {[}3.3{]} \\
9 & 42 & -25.8 (7.1, -58.8) \par \hfil {[}16.8{]} & -40.8 (-21.2, -60.4) \par \hfil  {[}10{]} & \hfil -40.7 (-31.9, -49.5) \par \hfil  {[}4.5{]} \\
10 & 199 & -42.8 (-25.4, -60.1) \par \hfil {[}8.9{]} & -46.1 (-36, -56.3) \par \hfil  {[}5.2{]} &  \hfil -44.6 (-38.2, -51.1) \par \hfil  {[}3.3{]} \\ \bottomrule
\end{tabular}
\caption*{The number of individuals in center $c$ is $n_c$. The 95\% Wald confidence intervals are in parentheses. The standard error calculated from the sandwich estimator are in brackets. }
\end{table}

\newpage
\begin{figure}[!ht]
    \centering
    \caption{Forest plot of HALT-C trial estimates.}
    \label{fig:forest_plot_main}
    \includegraphics[width=11cm]{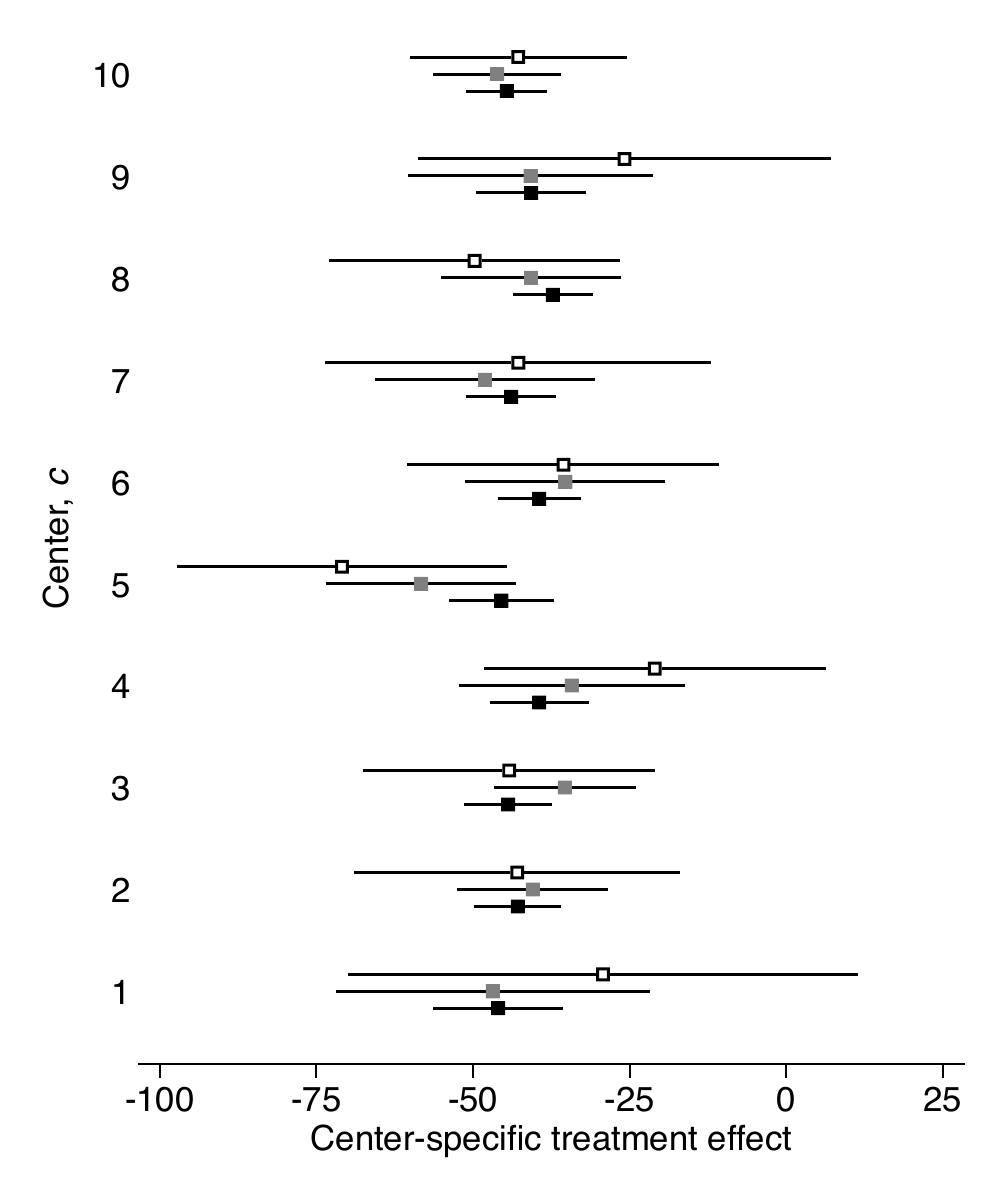}
    \caption*{ATE, center-specific average treatment effects. \\
    Point estimates (square markers) and 95\% confidence intervals (extending lines). White squares represent the crude analysis using the estimator $\widehat \delta_{\tau}(c,1,0)$; grey squares represent the adjusted analysis when center-outcome associations may be present using the estimator $\widehat \delta_{\phi}(c,1,0)$; black squares represent the adjusted analysis when center-outcome associations are absent using the estimator $\widehat \delta_{\psi}(c,1,0)$ .}
\end{figure}

\newpage
\clearpage

\includepdf[pages=-]{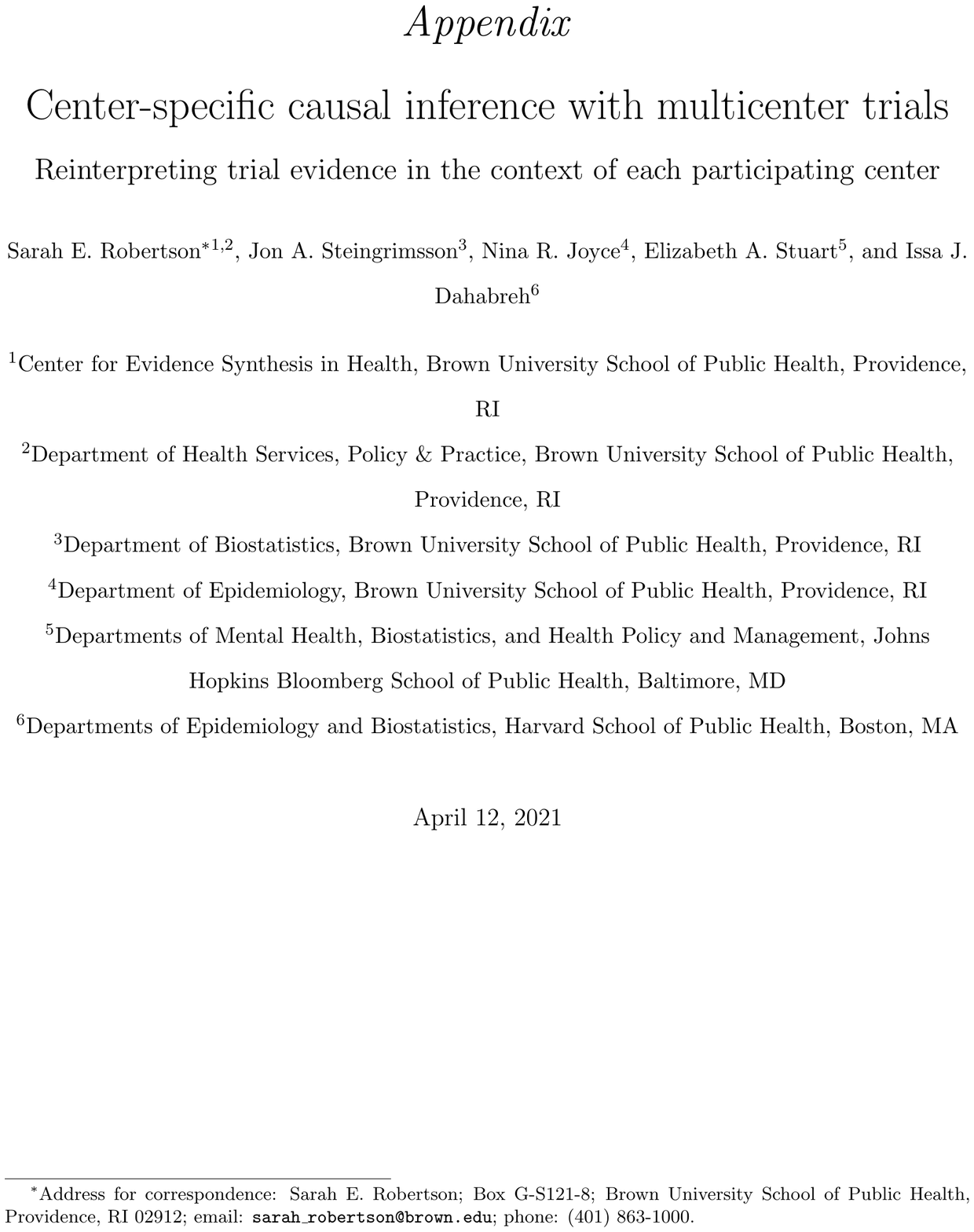}


\ddmmyyyydate 
\newtimeformat{24h60m60s}{\twodigit{\THEHOUR}.\twodigit{\THEMINUTE}.32}
\settimeformat{24h60m60s}
\begin{center}
\vspace{\fill}\ \newline
\textcolor{black}{{\tiny $ $multicenter, $ $ }
{\tiny $ $Date: \today~~ \currenttime $ $ }
{\tiny $ $Revision: \paperversionmajor.\paperversionminor $ $ }}
\end{center}

\end{document}


\maketitle

\thispagestyle{empty}

\clearpage
\setcounter{tocdepth}{1}
\tableofcontents

\thispagestyle{empty}

\clearpage
\appendix 
\renewcommand{\thesection}{Appendix \arabic{section}}
\renewcommand{\thesubsection}{\arabic{section}.\arabic{subsection}}
\pagenumbering{arabic}



\clearpage

\section{Sampling model}\label{appendix:sampling_model}

In order to use information obtained from multiple centers to draw inferences about treatment effects we need a sampling model to reflect the relationships between the data from each center and the underlying population of each center. Similar to our recent work on causally interpretable meta-analysis \citelatex{dahabreh2020toward, dahabreh2019efficient}, we assume that the observed data are obtained by random sampling from an infinite superpopulation of individuals that is stratified by $C$, with sampling fractions that are unknown and possibly unequal constants for each stratum with $C=c$, with $c \in \{0, 1, \cdots, m \}$. We refer to this sampling model as a ``biased sampling model'' \citelatex{bickel1998efficient}, because the sample proportion of individuals in the data with $C=c$, $n^{-1} \sum_{i=1}^{n}I(C_i = c)$, does not in general reflect the superpopulation probability of $C=c$. In the remainder of this appendix all distributions, probability mass functions, and expectations are under the biased sampling model.

\clearpage
\section{Identification}\label{appendix:identification}

\subsection{Identifiability conditions}\label{appendix:identifiability_conditions}

\noindent
\emph{A1 – Consistency of potential outcomes:} if $A_i = a$, then $Y_i^a=Y_i$, for every individual $i$ in the population underlying the randomized trial and each treatment $a \in \mathcal A$

\vspace{0.05in}
\noindent
\emph{A2 – Conditional exchangeability over $A$ in the trial}: for every $a \in \mathcal A$ and every $c \in \mathcal C$, $Y^a \independent A |(X,C = c)$.

\vspace{0.05in}
\noindent
\emph{A3 – Positivity of treatment assignment in each center}:  
for every covariate pattern $x$ and every center $c$ with positive density $f(x,c) \neq 0$, $\Pr[A=a|X=x,C=c]>0$, for each treatment $a \in \mathcal A$. 

\vspace{0.05in}
\noindent
\emph{A4 – No center-outcome associations, given covariates and treatment:} for every treatment $a \in \mathcal A$, $Y \independent  C|X, A=a$.

\subsection{Identification of center-specific potential outcome means}\label{appendix:identification_POM}

We now consider identification results for $\E[Y^a|C=c]$.

\subsubsection{Center-outcome associations may be present} 

\begin{proposition}\label{app:prop1}
Under conditions A1 through A3, \textup{$\E[Y^a |C=c]$} is identified by the observed data functional \textup{$\phi(c, a)=\E \big[\E[Y|X, C=c, A=a] \big|C=c \big]$}, for each treatment $a \in \mathcal A$ and each center $c \in \mathcal C$.
\end{proposition}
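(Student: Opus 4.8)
The plan is to follow the standard sequence of steps underlying g-formula identification, carried out throughout within the stratum $C=c$. First I would apply the law of total expectation to write
$$\E[Y^a \mid C=c] = \E\big[\, \E[Y^a \mid X, C=c] \,\big|\, C=c \,\big],$$
where the outer expectation is over the conditional distribution of $X$ given $C=c$ under the biased sampling model. Because the sampling fraction is a constant within the stratum $C=c$, this conditional distribution coincides with the corresponding superpopulation conditional distribution, so working under the biased sampling model costs nothing at this point.

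Next I would treat the inner conditional expectation. By condition A3 (positivity), for every $x$ in the support of $f(x\mid C=c)$ the conditioning event $\{X=x, C=c, A=a\}$ has positive density, so $\E[Y^a \mid X, C=c, A=a]$ is well defined almost everywhere with respect to $f(x\mid C=c)$. Condition A2 (conditional exchangeability over $A$ in the trial) then yields $\E[Y^a \mid X, C=c] = \E[Y^a \mid X, C=c, A=a]$, and condition A1 (consistency) lets me replace $Y^a$ with $Y$ on the event $\{A=a\}$, giving $\E[Y^a \mid X, C=c, A=a] = \E[Y \mid X, C=c, A=a]$.

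Substituting back into the iterated expectation gives $\E[Y^a \mid C=c] = \E\big[\, \E[Y\mid X, C=c, A=a] \,\big|\, C=c\,\big] = \phi(c,a)$, which is the claimed result. I would also remark that condition A4 plays no role in this argument; it becomes relevant only for the alternative identifying functional derived under the assumption that center-outcome associations are absent.

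I expect the only delicate point to be the bookkeeping around positivity — ensuring that the intermediate object $\E[Y^a\mid X, C=c, A=a]$ is defined on a set of $X$-values of full conditional probability, so that the outer expectation over $X\mid C=c$ is taken over a region where the integrand is meaningful. The remaining manipulations are a routine combination of A1, A2, and iterated expectations.
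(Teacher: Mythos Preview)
Your proposal is correct and follows essentially the same route as the paper: iterate expectations over $X$ within the stratum $C=c$, invoke A2 to insert the conditioning on $A=a$, invoke A1 to replace $Y^a$ by $Y$, and appeal to A3 to ensure the conditional expectations given $A=a$ are well defined. Your additional remarks about the biased sampling model and the irrelevance of A4 are accurate but go slightly beyond what the paper's proof records.
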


\begin{proof}
Starting from the causal quantity of interest,
\begin{equation}\label{id:phi_APP}
	\begin{split}
\E [Y^a|C=c ]	& \stackrel{}{=}  \E \big[\E[ Y^a| X, C=c]\big |C=c] \\
	& \stackrel{}{=}  \E \big[\E[ Y^a|X, C=c, A=a]\big |C=c] \\
	& \stackrel{}{=}   \E \big[ \E[ Y |X, C=c, A=a] |C=c  \big] \\	
	& \stackrel{}{\equiv} \phi(c, a),
	\end{split}
\end{equation}
where the first step follows by the law of iterated expectation, the second by condition A2, the third by condition A1, the last by the definition of $\phi(c, a)$, and quantities conditional on $A = a$ are well-defined by condition A3. 
\end{proof}

In multicenter trials where treatment assignment is independent of covariates and center participation, condition A2 can be replaced by the stronger independence condition $(Y^a,X,C) \independent A$, for each treatment $a \in \mathcal A$. Under this stronger independence condition and conditions A1 and A3, $\E[Y^a |C=c]$ is identified by the observed data functional 
\begin{equation}\label{id:tau_APP}
    \tau(c, a)= \E[Y|C=c, A=a],
\end{equation} 
and it has to be that $ \tau(c, a) = \phi(c, a)$.

\subsubsection{Center –- outcome associations are absent}

\begin{proposition}\label{app:prop2}
Under conditions A1 through A4, \textup{$\E[Y^a |C=c]$} is identified by the observed data functional  \textup{$\psi(c, a)=\E[\E[Y|X,A=a]|C=c]$}.
\end{proposition}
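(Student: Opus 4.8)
The plan is to build directly on Proposition~\ref{app:prop1}. Since conditions A1 through A3 hold, that proposition already gives $\E[Y^a|C=c] = \phi(c,a) = \E\big[\E[Y|X,C=c,A=a]\,\big|\,C=c\big]$. It then remains only to show that condition A4 collapses the inner conditional expectation to one that does not involve $C$, namely $\E[Y|X,A=a]$, after which the outer expectation over the conditional distribution of $X$ given $C=c$ yields $\psi(c,a)$ by definition.

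Concretely, I would argue as follows. Condition A4 states $Y \independent C \mid X, A=a$; phrased in terms of conditional expectations, this means that for every covariate pattern $x$ with $f(x,c)\neq 0$ we have $\E[Y|X=x,C=c,A=a] = \E[Y|X=x,A=a]$. Substituting this identity inside the outer expectation in $\phi(c,a)$ gives
\[
\phi(c,a) = \E\big[\E[Y|X,C=c,A=a]\,\big|\,C=c\big] = \E\big[\E[Y|X,A=a]\,\big|\,C=c\big] \equiv \psi(c,a),
\]
which is the claimed functional. Equivalently, one can re-run the chain in \eqref{id:phi_APP}: apply the law of iterated expectations to write $\E[Y^a|C=c] = \E\big[\E[Y^a|X,C=c]\,\big|\,C=c\big]$, use A2 to insert conditioning on $A=a$, use A1 to replace $Y^a$ by $Y$, and then use A4 to drop $C=c$ from the inner conditioning event, leaving the outer average over $X \mid C=c$.

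The only point requiring care is the interplay between A3 and A4 in the inner expectation: the substitution $\E[Y|X=x,C=c,A=a]=\E[Y|X=x,A=a]$ is invoked only at covariate patterns $x$ receiving positive density given $C=c$, and for such $x$ the quantity $\E[Y|X=x,C=c,A=a]$ is well-defined by positivity (A3), so the outer averaging over $X\mid C=c$ is unproblematic. Beyond this bookkeeping I do not anticipate a genuine obstacle; the substance of the result is simply the observation that A4 is precisely the statement that center membership carries no additional information about the outcome once covariates and treatment are held fixed, which is exactly what lets $\E[Y|X,C=c,A=a]$ be replaced by $\E[Y|X,A=a]$ in the identifying functional.
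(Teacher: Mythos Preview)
Your proposal is correct and follows essentially the same approach as the paper's own proof: invoke Proposition~\ref{app:prop1} to obtain $\E[Y^a|C=c]=\E\big[\E[Y|X,C=c,A=a]\,\big|\,C=c\big]$ under A1--A3, and then use A4 to drop $C=c$ from the inner conditioning, yielding $\psi(c,a)$. The paper's argument is terser than yours but the substance is identical.
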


\begin{proof}
As shown in Proposition \ref{app:prop1}, under conditions A1 through A3,
\begin{equation*}
\E [Y^a|C=c ] = \E \big[ \E[ Y |X, C=c, A=a] |C=c  \big].
\end{equation*}
By condition A4, it follows that
\begin{equation}\label{id:psi_APP}
\E [Y^a|C=c ] = \E \big[ \E[ Y |X, A=a] |C=c  \big] \equiv \psi(c, a).
\end{equation}
\end{proof}

\subsection{Identification of center-specific average treatment effects}\label{appendix:identification_ATE}

In addition to the center-specific potential outcome means, we are typically interested in the center-specific average treatment effect, $\E[Y^a-Y^{a'}| C = c] =  \E[Y^{a} | C=c ]  - \E[Y^{a'}|C=c]$, for each $c \in \mathcal C$ and every pair of treatments $a$ and $a^\prime$ in $\mathcal A$. 

\paragraph{Identification under the same assumptions as for center-specific potential outcome means:} When the center-specific potential outcome means are identifiable, we can use contrasts between them to identify center-specific average treatment effects. For example, under assumptions A1 through A3, the center-specific average treatment effect comparing treatments $a$ and $a^\prime$ can be identified by $\delta_{\tau}(c, a, a^\prime) =  \tau(c,a)-  \tau(c, a^\prime)$ or by $\delta_{\phi}(c, a, a^\prime) =  \phi(c,a)-  \phi(c, a^\prime)$. Likewise, under assumptions A1 through A4, the center-specific average treatment effect comparing treatments $a$ and $a^\prime$ can be identified by $\delta_{\psi}(c, a, a^\prime) =  \psi(c,a)-  \psi(c, a^\prime)$.

Next, we discuss the identification of center-specific average treatment effects under weaker assumptions than those needed to identify center-specific potential outcome means. 

\paragraph{Identification under homogeneity of the treatment-outcome association across centers:} 
When center-outcome associations are absent, it is possible to identify the average treatment effect (but not the potential outcome means) by combining conditions A1 through A3, with a condition weaker than condition A4. Consider the following condition:

\vspace{0.05in}
\noindent
\emph{A4$^*$ - Homogeneity of treatment-outcome associations across centers:}
for every pair of treatments $a$ and $a'$ in $\mathcal A$, every center $c \in \mathcal C$, and for every covariate $x$ pattern such that $f(x,c,a) \neq 0$ and $f(x,c,a') \neq 0$, 
\begin{equation*}
    \begin{split}
        &\E[Y|X=x,C=c, A=a] - \E[Y|X=x,C=c, A = a^\prime] \\
            &\quad\quad\quad = \E[Y|X=x, A = a] - \E[Y|X=x, A = a^\prime].
    \end{split}
\end{equation*}

\noindent
Condition A4$^*$ requires homogeneity of the treatment-outcome association across centers (on the risk difference scale), conditional on covariates and treatment. In contrast, condition A4 from the main text, requires ``no center-outcome associations, given covariates and treatment'': for every treatment $a \in \mathcal A$, $Y \independent  C|X, A=a$.
Condition A4$^*$ is weaker than condition A4 in the main text because condition A4 implies condition A4$^*$; but condition A4$^*$ does not imply condition A4. Under condition A4$^*$, the result in Proposition \ref{app:prop2} does not necessarily hold, but we can still obtain an identification result for center-specific average treatment effects.


\begin{proposition}
Under conditions A1 through A3, and A4$^*$, the center-specific average treatment effect, \textup{$\E[Y^a-Y^{a'} |C=c]$} is identified by the observed data functional  \textup{$\rho(c, a, a')\equiv\E[\lambda(a,a';X)|C=c]$, 
where $\lambda(a,a';X) \equiv \E[Y|X, A=a]-\E[Y|X,A=a'].$}
\end{proposition}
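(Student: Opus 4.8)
The plan is to build directly on Proposition \ref{app:prop1} and then substitute using condition A4$^*$ inside an outer expectation. First I would write the center-specific average treatment effect as a contrast of center-specific potential outcome means, $\E[Y^a - Y^{a'}|C=c] = \E[Y^a|C=c] - \E[Y^{a'}|C=c]$, which holds by linearity of expectation. Next, since conditions A1 through A3 are in force, I would apply Proposition \ref{app:prop1} to each term separately, replacing each potential outcome mean by its $g$-formula functional $\phi(c,\cdot)$, to obtain $\E[Y^a - Y^{a'}|C=c] = \E\big[\E[Y|X,C=c,A=a] - \E[Y|X,C=c,A=a'] \,\big|\, C=c\big]$; the two outer expectations can be merged into one because both are taken with respect to the law of $X \mid C=c$.

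Then I would invoke condition A4$^*$ pointwise inside that outer expectation. For $x$ in the support of $f(x,c)$, positivity of treatment assignment (condition A3) guarantees $f(x,c,a)\neq 0$ and $f(x,c,a')\neq 0$, so A4$^*$ applies and the integrand equals $\E[Y|X=x,A=a] - \E[Y|X=x,A=a'] = \lambda(a,a';x)$ for almost every $x$ under the conditional law of $X \mid C=c$. Substituting yields $\E[Y^a - Y^{a'}|C=c] = \E[\lambda(a,a';X)|C=c] \equiv \rho(c,a,a')$, which is the claimed result. Note that, unlike in Proposition \ref{app:prop2}, no analogous identification of the individual potential outcome means follows here, since A4$^*$ constrains only treatment contrasts and not outcome levels.

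The main (and essentially only) subtlety is the support bookkeeping: one must verify that the region of $x$ over which the outer expectation integrates is contained in the region where A4$^*$ is assumed to hold, and that the inner conditional expectations $\E[Y|X=x,C=c,A=a]$ and $\E[Y|X=x,A=a]$ are well-defined there. This is precisely what condition A3 supplies — without positivity these conditional means could be undefined on a non-null set and the pointwise substitution would fail. Everything else is a routine combination of Proposition \ref{app:prop1} with linearity of conditional expectation.
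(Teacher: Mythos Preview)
Your proof is correct and follows essentially the same route as the paper: both use A1--A3 (via Proposition \ref{app:prop1} in your case) to reach the center-specific conditional contrast $\E[Y|X,C=c,A=a]-\E[Y|X,C=c,A=a']$ and then invoke A4$^*$ pointwise to replace it by $\lambda(a,a';X)$. The only cosmetic difference is that the paper first establishes the conditional-on-$X$ identity $\E[Y^a-Y^{a'}\mid X=x,C=c]=\lambda(a,a';x)$ and then averages over $X\mid C=c$, whereas you average first via Proposition \ref{app:prop1} and substitute inside the outer expectation; the support check you flag is exactly the one the paper relies on implicitly.
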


\begin{proof}

We will first show that under conditions A1 through A3, and A4$^*$, for each $c \in \mathcal C$, and each $x$ with positive density $f(x,c)>0$,  $\E [Y^a-Y^{a'}|X = x, C=c ] = \lambda(a, a'; x)$. 

Starting from the left-hand-side, 
\begin{equation*}
	\begin{split}
\E [Y^a-Y^{a'}|X = x, C=c ]	
	& \stackrel{}{=}  \E[ Y^a| X = x, C=c,  A=a] - \E[ Y^{a'}| X = x, C=c, A=a']   \\	
		& \stackrel{}{=}  \E[ Y| X = x, C=c, A=a] - \E[ Y| X = x, C=c, A=a'] \\
			& \stackrel{}{=}  \E[ Y| X = x, A=a] - \E[ Y| X = x, A=a'] \\
			& \stackrel{}{\equiv} \lambda(a, a'; x),
	\end{split}
\end{equation*}
where the first step follows from linearity of expectations and condition A2, the second by condition A1, the third by condition A4$^*$, the last by definition of $\lambda(a, a'; X)$ at $X = x$, and quantities conditional on $A=a$ are well-defined by condition A3. 

Using the above result and the law of iterated expectation, we obtain:
\begin{equation*}
	\begin{split}
\E [Y^a-Y^{a'}| C=c ]	
& \stackrel{}{=} \E \big[\E[ Y^a-Y^{a'}| X, C=c]\big |C=c]  \\
	& \stackrel{}{=}  \E \big[  \lambda(a, a';X) \big |C=c]   \\	
	& \stackrel{}{\equiv} \rho(c, a, a'),
	\end{split}
\end{equation*}
which establishes that the center-specific average treatment effect is identified by $\rho(c, a, a').$
\end{proof}

\subsection{Identification under a weaker consistency assumption}\label{appendix:alternative_consistency}

Our results in the main text of the paper and thus far in this appendix have relied on condition A1 to connect the observed outcomes with potential (counterfactual) outcomes. Intuitively, condition A1 may be implausible if the centers in the trial apply different ancillary treatments or policies that may directly affect the outcome (not through treatment $A$). To address this possibility, we shall consider counterfactual outcomes under the joint intervention to implement all outcome-relevant policies of center $c$ \emph{and} set treatment $A$ to $a$; we denote these potential outcomes as $Y^{c,a}$. Now, consider the following consistency assumption: 

\vspace{0.05in}
\noindent
\emph{A1$^*$ – Consistency of potential outcomes for joint interventions on center-participation and treatment:} if $C_i=c$ and $A_i = a$, then $Y_i^{c,a}=Y_i$, for every individual $i$ in the population underlying the randomized trial, each treatment $a \in \mathcal A$, and each center $c \in \mathcal C$.

Condition A1$^*$ connects the potential outcomes $Y^{c,a}$ to the observed outcome $Y$, but is restricted to individuals with $C=c$ and $A = a$; thus, condition A1$^*$ is plausible even if different centers apply different ancillary treatments or policies that may directly affect the outcome. In fact, condition A1$^*$ even allows for direct effects of center participation on the outcome, for example, through Hawthorne-type effects \citelatex{dahabreh2019generalizing}. 

Note that we could obtain condition A1 from A1$^*$, if we assumed that $Y_i^{c,a}=Y_i^a$, for every individual $i$ in the population underlying the randomized trial, for each treatment $a \in \mathcal A$, and each center $c \in \mathcal C$. This additional condition is an exclusion restriction assumption that center participation does not have effects on the outcome except through the assigned treatment \citelatex{dahabreh2019generalizing}. 

We will now show that the center-specific expectation of the potential outcomes under joint intervention to implement all outcome-relevant policies of center $c$ \emph{and} set treatment $A$ to $a$, that is, $\E[Y^{c,a}|C=c]$, is identifiable by $\phi(c, a)$. To do so, we replace condition A2, with the following exchangeability condition: 

\vspace{0.05in}
\noindent
\emph{A2$^*$ – Conditional exchangeability over $A$ in the trial for joint interventions on center-participation and treatment}: for every $a \in \mathcal A$ and every center $c \in \mathcal C$, $Y^{c,a} \independent A |(X,C=c)$. 

\begin{proposition}
Under conditions A1$^*$, A2$^*$, and A3, \textup{$\E[Y^{c,a} |C=c]$} is identified by the observed data functional \textup{$\phi(c, a)=\E \big[\E[Y|X, C=c, A=a] \big|C=c \big]$}.
\end{proposition}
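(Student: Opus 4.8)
The plan is to follow the same three-step chain used in the proof of Proposition~\ref{app:prop1}, substituting condition A1$^*$ for A1 and condition A2$^*$ for A2, and then checking that the restricted (to $\{C=c,A=a\}$) nature of A1$^*$ does not obstruct the argument. First I would start from the target quantity $\E[Y^{c,a}|C=c]$ and apply the law of iterated expectation to write it as $\E\big[\E[Y^{c,a}|X,C=c]\big|C=c\big]$. Next, I would use condition A2$^*$ to introduce conditioning on $A=a$ in the inner expectation, obtaining $\E\big[\E[Y^{c,a}|X,C=c,A=a]\big|C=c\big]$; this is the step that uses exchangeability of the joint-intervention potential outcome $Y^{c,a}$ over $A$ within strata of $(X,C)$, and condition A3 is invoked here to guarantee that the inner conditional expectation given $A=a$ is well-defined wherever $f(x,c)\neq 0$.

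Then I would invoke condition A1$^*$ to replace $Y^{c,a}$ with the observed outcome $Y$ inside the inner expectation, yielding $\E\big[\E[Y|X,C=c,A=a]\big|C=c\big]$, which is exactly $\phi(c,a)$ by definition. The one point that needs care is that A1$^*$ equates $Y^{c,a}$ with $Y$ only for individuals who have both $C=c$ and $A=a$; but since the inner expectation is taken conditional on $C=c$ and $A=a$, we are integrating over precisely that subpopulation, so the substitution is legitimate. I would then close by noting that the outer expectation is over the (unrestricted) conditional distribution of $X$ given $C=c$, so nothing further is required.

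The main obstacle is not computational but one of bookkeeping: making explicit that the support restriction built into A1$^*$ is automatically satisfied because consistency is applied only \emph{after} conditioning on the event $\{C=c,A=a\}$, so the full-population consistency statement A1 is never needed. I would foreground this observation in the write-up, since it is the whole reason A1$^*$ suffices in place of A1 and is the conceptual payoff of the proposition — the identifying functional $\phi(c,a)$ is unchanged, but now it targets $\E[Y^{c,a}|C=c]$, which accommodates direct effects of center participation (e.g., Hawthorne-type effects) and center-specific ancillary policies.
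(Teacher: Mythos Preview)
Your proposal is correct and follows essentially the same three-step chain as the paper's proof: iterated expectation, then A2$^*$ to insert $A=a$, then A1$^*$ to replace $Y^{c,a}$ with $Y$, with A3 ensuring well-definedness. Your explicit remark that A1$^*$ is applied only after conditioning on $\{C=c,A=a\}$ is a helpful clarification the paper leaves implicit.
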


\begin{proof}
Starting from the causal quantity of interest,
\begin{equation}\label{id:phi_APP_alternative}
	\begin{split}
\E [Y^{c,a}|C=c ]	& \stackrel{}{=}  \E \big[\E[ Y^{c,a}| X, C=c]\big |C=c] \\
	& \stackrel{}{=}  \E \big[\E[ Y^{c,a}|X, C=c, A=a]\big |C=c] \\
	& \stackrel{}{=}   \E \big[ \E[ Y |X, C=c, A=a] |C=c  \big] \\	
	& \stackrel{}{\equiv} \phi(c, a),
	\end{split}
\end{equation}
where the first step follows by the law of iterated expectation, the second by condition A2$^*$, the third by condition A1$^*$, last by the definition of $\phi(c, a)$; and quantities conditional on $A = a$ are well-defined by condition A3. 
\end{proof}
This result shows that under conditions A1$^*$, A2$^*$, and A3 the observed data functional $\phi(c, a)$ can be interpreted as the expectation of the potential outcome under joint intervention to implement all outcome-relevant policies of center $c$ and set treatment $A$ to $a$, $\E [Y^{c,a}|C=c ]$. Interestingly, as shown in Appendix Section \ref{appendix:identification_POM}, under conditions A1, A2, and A3, the same functional can be interpreted as the expectation of the potential outcome under intervention to set treatment $A$ to $a$ regardless of center participation, that is $\E [Y^{a}|C=c ]$ .

\clearpage
\setcounter{figure}{0}
\setcounter{equation}{0}
\section{Relation to ``transportability analyses''} \label{appendix:connection_transportability1}


\subsection{Identifiability condition for transporting inferences across centers}

In this section, we will introduce an alternative causal condition that is often used in the literature on transportability analyses \citelatex{dahabreh2020toward, dahabreh2019efficient}:

\vspace{0.05in}
\noindent
\emph{A5 – Exchangeability across centers conditional on covariates}: $Y^a \independent C|X$. 

\noindent For notational convenience, for each center $c$ in the collection of centers $\mathcal C$ we define a new set, $$ \mathcal C_{(c)} = \{ j : j \in \mathcal C, j \neq c \};$$ informally, this is the collection of all centers except center $c$. Furthermore, we define a new random variable, $R_{(c)}$, as the indicator function for belonging in $\mathcal C_{(c)}$, $$ R_{(c)} = I(C \in \mathcal C_{(c)}). $$

We now explore some implications of the identifiability condition A5. By Lemma 4.2 of \citelatex{dawid1979conditional}, condition A5 implies exchangeability over $C$ among the collection of centers (excluding the target center), that is for every $a \in \mathcal A$ and every $c \in \mathcal C$,
\begin{equation}\label{eq:independencies1}
Y^a \independent C | X \implies  Y^a \independent C | (X, R_{(c)}=1).
\end{equation}
Furthermore, condition A5 also implies exchangeability of participants in the collection of centers (excluding the target center) and the target center,
\begin{equation}\label{eq:independencies2}
Y^a \independent C | X \implies Y^a \independent I(C=c) | X \Longleftrightarrow Y^a \independent R_{(c)} | X.
\end{equation}
The first of these results will be useful to derive restrictions on the law of the observed data; the second result will be useful in obtaining identification results for causal quantities in the target center using information from the collection of all other centers in the trial. 

By Lemma 4.2 of \citelatex{dawid1979conditional}, the result in \eqref{eq:independencies1} and condition A2 imply that, conditional on covariates, the potential outcomes $Y^a$ are independent of $(C, A)$ in the collection of centers:
\begin{equation}\label{eq:independencies3}
    \begin{Bmatrix}
Y^a \independent A | (X, R_{(c)}=1, C) \\ 
Y^a \independent C | (X, R_{(c)}=1) \\ 
\end{Bmatrix} \iff Y^a \independent (C, A) | (X, R_{(c)}=1).
\end{equation}
Noting that $Y^a \independent (C, A) | (X, R_{(c)}=1)$ implies $Y^a \independent C  | (X, R_{(c)}=1, A = a)$, by the consistency condition A1, we obtain
\begin{equation}\label{eq:independencies4}
    Y \independent C | (X, R_{(c)}=1, A = a).
\end{equation}
Thus, among individuals participating in any center, excluding the target center $c$, that is to say, when $R_{(c)}=1$, the observed outcome $Y$ is independent of center membership $C$, within treatment groups ($A = a$) and conditional on covariates $X$. Because this condition does not involve potential outcomes, it is testable using the observed data (e.g., using methods for comparing conditional densities). Furthermore, because $\{R_{(c)}=0\} \Longleftrightarrow \{R_{(c)}=0,C=c\}$, we also obtain that $Y \independent C | (X, R_{(c)}, A = a)$. 

The independence condition in \eqref{eq:independencies4} implies that for every $a \in \mathcal A$ and every $x$ such that $f(x, C = c) \neq 0$,
\begin{equation}\label{eq:observed_data_implications2}
  \E[Y | X = x, C = 1, A = a] = \ldots = \E[Y | X = x, C = m, A = a],
\end{equation} 
which is also testable using the observed data.

\subsection{Identification of center-specific effects using ``transportability analyses''}

In this section, we will obtain an alternative identification result for $\E[Y^a |C=c]$ by replacing condition A4 with the transportability condition A5.

\begin{proposition}
Under conditions A1 through A3, and A5, \textup{$\E[Y^a |C=c]$} is identified by the observed data functional \textup{$\chi(c, a) \equiv  \E \big[ \E[ Y |X, R_{(c)}=1, A=a] |C=c  \big]$}, for each treatment $a \in \mathcal A$ and each center $c \in \mathcal C$.
\end{proposition}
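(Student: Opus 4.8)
The plan is to mirror the structure of the proof of Proposition~\ref{app:prop1}, peeling off one conditioning operation at a time, but now using the transportability condition A5 to move inference from the target center $c$ to the collection of all other centers. First I would start from $\E[Y^a \mid C=c]$ and apply the law of iterated expectation to write it as $\E\big[\E[Y^a \mid X, C=c] \,\big|\, C=c\big]$, exactly as in the first line of \eqref{id:phi_APP}.

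The key step is to replace the inner conditioning event $C=c$ by $R_{(c)}=1$. For this I would invoke the consequence of A5 recorded in \eqref{eq:independencies2}, namely $Y^a \independent R_{(c)} \mid X$, together with A5 itself, $Y^a \independent C \mid X$: both imply that $\E[Y^a \mid X, C=c]$ and $\E[Y^a \mid X, R_{(c)}=1]$ each equal $\E[Y^a \mid X]$, hence equal each other, so $\E[Y^a \mid C=c] = \E\big[\E[Y^a \mid X, R_{(c)}=1] \,\big|\, C=c\big]$. Next I would introduce conditioning on $A=a$ within the pooled collection of other centers: \eqref{eq:independencies3} gives $Y^a \independent (C,A) \mid (X, R_{(c)}=1)$, which in particular yields $Y^a \independent A \mid (X, R_{(c)}=1)$, so $\E[Y^a \mid X, R_{(c)}=1] = \E[Y^a \mid X, R_{(c)}=1, A=a]$. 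Finally, consistency (A1) lets me drop the superscript: for individuals with $R_{(c)}=1$ and $A=a$ we have $Y^a=Y$, hence $\E[Y^a \mid X, R_{(c)}=1, A=a] = \E[Y \mid X, R_{(c)}=1, A=a]$. Chaining these equalities gives $\E[Y^a \mid C=c] = \E\big[\E[Y \mid X, R_{(c)}=1, A=a] \,\big|\, C=c\big] = \chi(c,a)$.

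I expect the main subtlety to be the second step, the swap of $C=c$ for $R_{(c)}=1$, since that is precisely where transportability does its work; it also tacitly requires that the covariate patterns occurring in center $c$ occur with positive density among the other centers, so that $\E[Y \mid X=x, R_{(c)}=1, A=a]$ is well-defined for $(f(x,C=c)\neq 0)$-almost every $x$ — the analogue of the positivity-of-participation condition from the transportability literature, which I would flag alongside the appeal to A3 (the latter ensuring that conditioning on $A=a$ is well-defined within the pooled set of other centers). The remaining steps are routine manipulations of conditional expectations justified by the independence relations already derived in \eqref{eq:independencies1}--\eqref{eq:independencies4}.
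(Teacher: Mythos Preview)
Your proof is correct and uses the same ingredients as the paper's, but applies them in a different order. The paper first follows the route of Proposition~\ref{app:prop1} verbatim --- iterated expectation, then A2 to add conditioning on $A=a$, then A1 to replace $Y^a$ by $Y$ --- arriving at $\E\big[\E[Y\mid X,C=c,A=a]\,\big|\,C=c\big]=\phi(c,a)$, and only then swaps $C=c$ for $R_{(c)}=1$ at the \emph{observed-data} level, invoking the consequence $Y\independent C\mid(X,A=a)$ derived around \eqref{eq:independencies4}. You instead perform the center swap first, at the \emph{counterfactual} level, using A5 and \eqref{eq:independencies2} directly, then add conditioning on $A=a$ via \eqref{eq:independencies3}, and invoke consistency last. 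The paper's ordering has the side benefit of making the identity $\phi(c,a)=\chi(c,a)$ under A5 explicit; your ordering is slightly more streamlined and keeps the argument at the potential-outcome level until the final step. Both are valid, and your remark about needing positivity of the covariate distribution of center $c$ among the other centers is a useful caveat that the paper leaves implicit.
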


\begin{proof}
Starting from the causal quantity of interest,
\begin{equation}\label{id:chi_APP}
	\begin{split}
\E [Y^a|C=c ]	& \stackrel{}{=}  \E \big[\E[ Y^a| X, C=c]\big |C=c] \\
	& \stackrel{}{=}  \E \big[\E[ Y^a|X, C=c, A=a]\big |C=c] \\
	& \stackrel{}{=}   \E \big[ \E[ Y |X, C=c, A=a] |C=c  \big] \\	
	& \stackrel{}{=}   \E \big[ \E[ Y |X, R_{(c)}=1, A=a] |C=c  \big] \\
	& \stackrel{}{\equiv} \chi(c, a),
	\end{split}
\end{equation}
where the first step follows by the law of iterated expectation, the second by condition A2, the third by condition A1, the fourth by condition A5 (and the lemmas introduced in the previous subsection, \ref{eq:independencies4}), the last by the definition of $\chi(c, a)$, and quantities conditional on $A = a$ are well-defined by condition A3. 
\end{proof}
The inner expectation in this expression pools information across all centers in the trial \emph{except} for the target center of interest in the trial $C=c$, among individuals assigned to treatment group $A=a$. In other words, $ \chi(c, a)$ is transporting inference from all centers in the trial except the target center, to the target center. This identifiability result is closely related to results on transporting inferences about causal effects from a trial (here, the aggregation of all centers in $\mathcal C_{(c)}$) to a target population (here, the population represented by center $c$) \citelatex{dahabreh2020extending}. In fact, this result underpins some recent analyses of multicenter trial data \citelatex{rudolph2018composition, rudolph2017robust}. In \ref{appendix:connection_transportability} we examine the estimation strategies suggested by applying this result to multicenter trials compared to the strategies we proposed in the main text of our paper (as well as \ref{appendix:influence_function}). We show that when both approaches are valid, the approach suggested in the main text of our paper has the potential to produce more efficient estimators of potential outcome mean compared to ``transportability'' analyses.

\subsection{Deriving condition A4 from causal conditions}

In this section, we show that condition A4 can be derived from the more primitive (elementary) causal assumption A5 that is often used in the literature on transportability analyses \citelatex{dahabreh2020toward, dahabreh2019efficient}. By Lemma 4.2 of \citelatex{dawid1979conditional}, conditions A2 ($ Y^a \independent A|X,C $) and A5 ($Y^a \independent C|X$) imply that $Y^a \independent (C,A)|X$. By the weak union property of conditional independence, $Y^a \independent (C,A)|X$ implies that $Y^a \independent C|(X,A)$, which in turn, by consistency, implies that $Y \independent C|(X,A = a)$, for every $a \in \mathcal A$. This last independence condition is precisely condition A4 (i.e., the condition of no center-outcome associations given covariates and treatment). Therefore, if we believe condition A5 holds, given that condition A2 is expected to hold by design, we should also believe that condition A4 holds (recall that condition A4 underpins the identification results about $\psi(a,c)$ given earlier in this Appendix and in the main text of the paper). The converse, however is not true: condition A4 can hold even if condition A5 does not.

\clearpage
\setcounter{figure}{0}
\section{Influence functions and estimators}\label{appendix:influence_function}
In this section, we focus on presenting results on the influence functions and estimators for $\phi(c, a)$ and $\psi(c,a)$, under nonparametric and semiparametric models. In the last subsection, we build on the results for potential outcome means to derive influence functions and estimators for center-specific average treatment effects. Throughout, we work under the biased sampling model, described in \ref{appendix:sampling_model} \citelatex{breslow2000semi}; see references for similar arguments in different settings \citelatex{dahabreh2019efficient, dahabreh2020extending, kennedy2015semiparametric}.

\subsection{Influence functions under the nonparametric model}

We now obtain the first-order influence function \citelatex{bickel1998efficient} for each functional $\phi(c, a)$ and $\psi(c, a)$ under the nonparametric model $\mathcal M_{\text{\tiny np}}$ for the law of the observed data, $O = (X, C, A, Y)$. 

\subsubsection{Center-outcome associations may be present}

Under $\mathcal M_{\text{\tiny np}}$, the influence function for $\phi(c, a)$ is 
\begin{equation}\label{eq:inf_func_phi}
    \begin{split}
\mathit{\Phi}^1_{p_0}(c, a) &= \dfrac{1}{\Pr_{p_0}[C=c]} \Bigg\{ \dfrac{I(C = c, A = a)}{\Pr_{p_0}[A = a| X, C=c]} \big\{ Y - \E_{p_0}[Y | X, C = c, A = a]  \big\} \\ 
&\quad\quad\quad\quad\quad\quad\quad\quad\quad\quad\quad\quad\quad\quad\quad\quad + I(C=c)\big\{ \E_{p_0}[Y|X, C = c , A = a] - \phi_{p_0}(c, a) \big\} \Bigg\},
    \end{split}
\end{equation}
where the subscript $p_0$ denotes that all quantities are evaluated at the ``true'' data law. Specifically, $\mathit\Phi^1_{p_0}(c, a)$ satisfies
\[
\dfrac{\partial \phi_{p_t}(c,a)}{\partial t} \Bigg|_{t=0}  = \E[\mathit\Phi^1_{p_0}(c, a) u(O)],
\] where $u(O)$ denotes the score of the observed data and the left hand side of the above equation is the pathwise derivative of the target parameter $\phi(c, a)$. Theorem 4.4 in \citelatex{tsiatis2006} shows that $\mathit\Phi_{p_0}^1(c, a)$ lies in the tangent set; it follows (see, e.g., \citelatex{van2000asymptotic}, page 363) that $\mathit\Phi^1_{p_0}(c, a)$ is the unique (and efficient) influence function under the nonparametric model for the observed data. 

\subsubsection{Center-outcome associations are absent}

The influence function for $\psi(c, a)$ is
\begin{equation}\label{eq:inf_func_psi}
    \begin{split}
\mathit{\Psi}^1_{p_0}(c, a) &= \dfrac{1}{\Pr_{p_0}[C=c]} \Bigg\{ \dfrac{I(A = a) \Pr_{p_0}[C = c | X]}{\Pr_{p_0}[A = a| X]} \big\{ Y - \E_{p_0}[Y | X, A = a]  \big\} \\ 
&\quad\quad\quad\quad\quad\quad\quad\quad\quad\quad\quad\quad\quad + I(C=c)\big\{ \E_{p_0}[Y|X, A = a] - \psi_{p_0}(c, a) \big\} \Bigg\}.
    \end{split}
\end{equation}
Following similar arguments to those for $\mathit\Phi^1_{p_0}(c, a)$ and $\phi(c,a)$, we can conclude that $\mathit\Psi^1_{p_0}(c, a)$ is the unique (efficient) influence function for $\psi(c, a)$ under the nonparametric model for the observed data. 

\subsection{Influence function under semiparametric models}

\subsubsection{Center-outcome associations may be present}

We will now argue that $\mathit\Phi^1_{p_0}(c, a)$ is the efficient influence function under a semiparametric model $\mathcal M_{\text{\tiny semi}}$ where the probability of treatment conditional on covariates and center participation, $p(a|x,c)$, is known. Under this model, we can decompose the tangent space as $\Lambda_{\text{\tiny semi}} =  \Lambda_{C} \oplus  \Lambda_{X|C} \oplus  \Lambda_{Y|X,C, A}$ \citelatex{tsiatis2006}. 
Rewriting the influence function under the nonparametric model as the sum of two terms, we obtain
\begin{equation}\label{eq:inf_func_rewritephi}
  \begin{split}
    \mathit\Phi^1_{p_0}(c, a) &=  \dfrac{I(C=c,A = a)}{\Pr_{p_0}[C = c]\Pr_{p_0}[A = a| X,C] }\Big\{ Y - \E_{p_0}[Y | X, C=c, A = a]\Big\}  \\
      &\quad\quad\quad +  \dfrac{I(C=c)}{\Pr_{p_0}[C = c]} \Big\{ \E_{p_0}[Y | X, C=c, A = a] - \phi_{ p_0}(c, a)\Big\}.
  \end{split}
\end{equation}
The first term is a function of $(X, C, A, Y)$ that has mean zero conditional on $(X,C,A)$, and thus belongs to $\Lambda_{Y|X,C, A}$. Furthermore, the second term in the above expression is a function of $(X,C)$, with mean zero conditional on $C$, and thus belongs to $\Lambda_{X|C}$. From these observations we conclude that the influence function under the semiparametric model, $\mathit\Phi^1_{p_0}(c, a)$, belongs to $\Lambda_{\text{\tiny semi}}$ and its projection onto that space is equal to itself. We can conclude that the unique influence function under the nonparametric model, $\mathit\Phi^1_{p_0}(c, a)$, is also the efficient influence function under the semiparametric model $\mathcal M_{\text{\tiny semi}}$.

\subsubsection{Center-outcome associations are absent}

When center-outcome association are absent, we need to examine the implications of two kinds of restrictions on the law of the observed data. First, we need to examine the impact of the independence condition $Y \independent C | (X, A)$. Second, we need to examine the impact of knowing the probability of treatment, that is, of knowing the conditional density $p(a|x,c)$.

\paragraph{Incorporating the restriction of no center-outcome associations, given covariates and treatment:}

Consider the semiparametric model $\mathcal M_{\text{\tiny semi}}^*$ that incorporates the restriction $Y\independent C | (X, A)$, that is $p(y|x,c,a) = p(y|x,a)$. Under this model, the density of the law of the observed data is 
\begin{equation*}
    p(x,c,a,y) = p(c) p(x|c) p(a|x,c) p(y|x, a),
\end{equation*}
and we obtain the tangent space decomposition $\Lambda_{\text{\tiny semi}}^* =  \Lambda_{C} \oplus  \Lambda_{X|C} \oplus \Lambda_{A|X, C} \oplus \Lambda_{Y|X, A}$. 
Rewriting the influence function under the nonparametric model as the sum of two terms, we obtain
\begin{equation}\label{eq:inf_func_rewrite}
  \begin{split}
    \mathit\Psi^1_{p_0}(c,a) &=  \dfrac{I(A = a)\Pr_{p_0}[C = c | X]}{\Pr_{p_0}[C = c]\Pr_{p_0}[A = a| X] }\Big\{ Y - \E_{p_0}[Y | X, A = a]\Big\}  \\
      &\quad\quad\quad +  \dfrac{I(C=c)}{\Pr_{p_0}[C = c]} \Big\{ \E_{p_0}[Y | X, A = a] - \psi_{p_0}(c, a)\Big\}.
  \end{split}
\end{equation}
The first term is a function of $(X, A, Y)$ that has mean zero conditional on $(X, A)$, and thus belongs to $\Lambda_{Y|X, A}$. Furthermore, the second term in the above expression is a function of $(X,C)$, with mean zero conditional on $C$, and thus belongs to $\Lambda_{X|C}$. From these observations we conclude that the influence function under the semiparametric model, $\mathit\Psi^1_{p_0}(c, a)$, belongs to $\Lambda_{\text{\tiny semi}}^*$ and its projection onto that space is equal to itself. We can conclude that the unique influence function under the nonparametric model, $\mathit\Psi^1_{p_0}(c, a)$, is also the efficient influence function under the semiparametric model $\mathcal M_{\text{\tiny semi}}^{*}$.

\paragraph{Knowing the probability of treatment:}

Consider now the semiparametric model $\mathcal M_{\text{\tiny semi}}^{**}$ for the law of the observed data where the restriction $Y\independent C | (X, A)$ holds and $p(a|x,c)$ is known. We can decompose the tangent space as $\Lambda_{\text{\tiny semi}}^{**} =  \Lambda_{C} \oplus  \Lambda_{X| C} \oplus \Lambda_{Y|X, A}$. Considering the expression for the influence function under the nonparametric model in \eqref{eq:inf_func_rewrite}, we see that the first term is a function of $(X, A, Y)$ that has mean zero conditional on $(X,A)$ thus it belongs to $\Lambda_{Y|X, A}$. Furthermore, the second term in the above expression is a function of $(X,C)$ that has mean zero conditional on $C$, thus it belongs to $\Lambda_{X|C}$. Thus, the influence function under the nonparametric model, $\mathit\Psi^1_{p_0}(c, a)$, belongs to $\Lambda_{\text{\tiny semi}}^{**}$ and its projection onto that space is equal to itself. We can conclude that the unique influence function under the nonparametric model, $\mathit\Psi^1_{p_0}(c, a)$, is also the efficient influence function under the semiparametric model $\mathcal M_{\text{\tiny semi}}^{**}$.

\subsection{Proposed estimators}

\subsubsection{Center-outcome associations may be present}
The influence function in display \eqref{eq:inf_func_phi} suggests that the following is a reasonable estimator for $\phi(c, a)$: \begin{equation}\label{eq:estimatorphi2_APP} 
    \begin{split} \widehat \phi(c, a) &= \dfrac{1}{ n_c} \sum\limits_{i=1}^{n} \Bigg\{ \dfrac{I(C_i=c,A_i = a)}{ \widehat e_a(X_i,C_i)} \Big\{ Y_i - \widehat g_a(X_i,C_i) \Big\} + I(C_i = c) \widehat g_a(X_i,C_i) \Bigg\}, 
    \end{split}
\end{equation} 
where $n_c = \sum_{i=1}^{n} I(C_i=c)$; $\widehat g_a(X,C)$ is an estimator for $\E[Y |X, C, A = a]$; and $\widehat e_a(X,C)$ is an estimator for $\Pr[A = a| X, C]$.

The adjusted estimator above is usually more efficient than the common approach that uses a crude analysis, where the marginal expectation of each treatment group is calculated within a center,
\begin{equation}\label{eq:tau_APP} 
    \begin{split} \widehat \tau(c, a) &= \dfrac{1}{ n _{c,a} } \sum\limits_{i=1}^{n} I(C_i=c, A_i=a) Y_i , 
    \end{split}
\end{equation} 
where $n_{c,a} = \sum_{i=1}^{n} I(C_i = c, A_i = a)$.

\subsubsection{Center-outcome associations are absent}

Suppose now that background knowledge suggests that the outcome is independent of center, conditional on covariates and treatment (i.e., assumption A4 is plausible) and the data does not provide evidence against it, such that the identification result in display \eqref{id:psi_APP} is valid, and $\psi(c, a)$ identifies the center-specific treatment effects. The influence function in display \eqref{eq:inf_func_psi} suggests that the following is a reasonable estimator  $\psi(c, a)$:
\begin{equation}\label{eq:estimatorpsi_APP}
    \begin{split}
  \widehat \psi(c, a) &= \dfrac{1}{ n_c } \sum\limits_{i=1}^{n} \Bigg\{ I(A_i = a) \dfrac{\widehat p_{c}(X_i)}{ \widetilde e_a(X_i)}  \Big\{ Y_i - \widetilde g_a(X_i) \Big\} + I(C_i = c) \widetilde g_a(X_i) \Bigg\},
  \end{split}
\end{equation}
where $n_c = \sum_{i=1}^{n} I(C_i=c)$; $\widetilde g_a(X)$ is an estimator for $\E[Y | X,A = a]$; $\widehat p_{c}(X)$ is an estimator for $\Pr[C = c| X]$; and $\widetilde e_a(X)$ is an estimator for $\Pr[A = a| X]$.

\subsection{Influence functions and estimators for the average treatment effect}
Here we present results on the influence functions and estimators for $\delta_{\phi}(c,a,a^\prime)$ and $\delta_{\psi}(c,a,a^\prime)$, using the results for potential outcomes means. 

\subsubsection{Center-outcome associations may be present}
The influence function for $\delta_{\phi}(c,a,a^\prime)$ is
$$\mathit{\Delta}^{1}_{\phi, p_0}(c,a,a^\prime)=\mathit{\Phi}^{1}_{p_0}(c,a) - \mathit{\Phi}^{1}_{p_0}(c,a^\prime).$$
An estimator for $\delta_{\phi}(c,a,a^\prime)$ can be obtained as $\widehat \delta_{\phi}(c,a,a^\prime) = \widehat \phi(c,a) - \widehat \phi(c,a^\prime)$.

\subsubsection{Center-outcome associations may be absent}
The influence function for $\delta_{\psi}(c,a,a^\prime)$ is
$$\mathit{\Delta}^{1}_{\psi, p_0}(c,a,a^\prime)=\mathit{\Psi}^{1}_{p_0}(c,a) - \mathit{\Psi}^{1}_{p_0}(c,a^\prime).$$
An estimator for $\delta_{\psi}(c,a,a^\prime)$ can be obtained as $\widehat \delta_{\psi}(c,a,a^\prime) = \widehat \psi(c,a) -  \widehat \psi(c,a^\prime)$.

\clearpage
\section{Estimation by weighting or g-computation} \label{appendix:estimation_gcomp_weighting} 
\setcounter{figure}{0}

To gain intuition into the doubly robust estimators, $\widehat \phi(c,a)$ and  $\widehat \psi(c,a)$, from the main text, it is useful to consider (non-robust) weighting \citelatex{rosenbaum1983, robins1992} or g-computation (outcome-model based) estimators \citelatex{greenland1986, robins1992}.

\subsection{Center-outcome associations may be present}

Consider $\widehat \phi(c,a)$. A weighting estimator is obtained by setting the weights in $\widehat \phi(c,a)$ to zero, 
\begin{equation*}\label{eq:estimatorphi_ipwAPP}
    \begin{split}
  \widehat \phi_{\text{\tiny{IPW}}}(c, a) &= \dfrac{1}{ n_c } \sum\limits_{i=1}^{n}  I(C_i=c, A_i = a) \dfrac{1}{ \widehat e_a(X_i, C_i)} Y_i .
  \end{split}
\end{equation*}
An outcome modeling estimator is obtained by setting the weights in $\widehat \phi(c, a)$ to zero, 
\begin{equation*}\label{eq:estimatorphi_omAPP}
    \begin{split}
  \widehat \phi_{\text{\tiny{OM}}}(c, a) &= \dfrac{1}{ n_c } \sum\limits_{i=1}^{n}  I(C_i = c) \widehat g_a(X_i, C_i).
  \end{split}
\end{equation*}

\subsection{Center-outcome associations are absent}

Now, consider $\widehat \psi(c,a)$ from the main text. A weighting estimator is obtained by setting $\widehat g_a(X)$ to zero, 
\begin{equation*}\label{eq:estimatorpsi_ipwAPP}
    \begin{split}
  \widehat \psi_{\text{\tiny{IPW}}}(c, a) &= \dfrac{1}{ n_c } \sum\limits_{i=1}^{n}  I(A_i = a) \dfrac{\widehat p_{c}(X_i)}{ \widetilde e_a(X_i)} Y_i .
  \end{split}
\end{equation*}
An outcome modeling estimator is obtained by setting the weights in $\widehat \psi(c, a)$ to zero, 
\begin{equation*}\label{eq:estimatorpsi_omAPP}
    \begin{split}
  \widehat \psi_{\text{\tiny{OM}}}(c, a) &= \dfrac{1}{ n_c } \sum\limits_{i=1}^{n}  I(C_i = c) \widetilde g_a(X_i).
  \end{split}
\end{equation*}

As in the previous sections, estimators of center-specific average treatment effects can be obtained by taking differences between the center-specific potential outcome mean estimators.

\clearpage
\section{Asymptotic properties of robust estimators} \label{appendix:asymptotic} 
\setcounter{figure}{0}

\subsection{Model robustness}

\subsubsection{Robustness for analyses when center-outcome associations may be present:}

Suppose that $\widehat e_{a}(X,C)$ and $\widehat g_{a}(X,C)$ converge in probability to well-defined limits that we denote as $e_{a}^*(X,C)$ and $g_{a}^*(X,C)$, respectively. Then, as $n \longrightarrow \infty$,
\begin{equation*}
    \begin{split}
        \widehat \phi(c, a) & \overset{p}{\longrightarrow} \dfrac{1}{\Pr[C = c]} \E \Bigg[ \dfrac{I(C = c, A = a)}{e_{a}^*(X, C)} \big\{ Y - g_{a}^*(X, C) \big\}  + I(C=c) g_{a}^*(X, C)  \Bigg].
    \end{split}
\end{equation*}

\noindent
Because the model for the treatment is known or if estimated cannot be misspecified, $e_{a}^*(X, C) = \Pr[A = a | X, C]$ always. However, the outcome model may be misspecified, so $g_{a}^*(X, C)$ is not necessarily equal to $\E[Y | X, C, A = a]$. Here we show that $\widehat \phi(c, a)$ remains robust even if the outcome model is misspecified.
To show this we can re-write the above expression as three terms inside the curly brackets and consider each term separately:
\begin{equation*}
    \begin{split}
        \widehat \phi(c, a) & \overset{p}{\longrightarrow} \dfrac{1}{\Pr[C = c]} \Bigg\{ \E \Bigg[ \dfrac{I(C = c, A = a)}{e_{a}^*(X, C)}Y\Bigg] - \E \Bigg[ \dfrac{I(C = c, A = a)}{e_{a}^*(X, C)}g_{a}^*(X, C) \Bigg] \\ &\quad\quad\quad+ \E \Bigg[I(C=c) g_{a}^*(X, C)  \Bigg]\Bigg\} .
    \end{split}
\end{equation*}

Consider the first term:
\begin{equation*}
	\begin{split}
\E \left[ \dfrac{I(C = c, A = a)}{e_{a}^*(X, C)}Y\right]
   & \stackrel{}{=} \E \Bigg[  \E \left[ \dfrac{I(C = c, A = a)}{e_{a}^*(X, C)}Y | X \right] \Bigg]  \\
	& \stackrel{}{=} \E \big[ \E[Y|X, C=c, A=a] \Pr[C=c|X] \big]  \\
	& \stackrel{}{=}  \E \big[ \E[Y|X, C=c, A=a] |C=c \big] \Pr[C=c]  \\	
	& \stackrel{}{=} \phi(c, a) \Pr[C=c].
	\end{split}
\end{equation*}

Now, consider the second term:
\begin{equation*}
	\begin{split}
\E \left[ \dfrac{I(C = c, A = a)}{e_{a}^*(X, C)}g_{a}^*(X, C) \right]
   & \stackrel{}{=} \E \Bigg[ \E \left[ \dfrac{I(C = c, A = a)}{e_{a}^*(X, C)}g_{a}^*(X, C) |X \right] \Bigg]  \\
	& \stackrel{}{=} \E \big[  g_{a}^*(X, C) \Pr[C=c|X] \big]  \\
	& \stackrel{}{=} \E \big[ I(C=c) g_{a}^*(X, C) \big]  \\
	\end{split}
\end{equation*}

Combining the above results, we have $\widehat \phi(c, a) \overset{p}{\longrightarrow}  \phi(c, a) $ even when $g_{a}^*(X, C)  \neq \E[Y | X,C=c,A=a]$, and we can conclude that analyses are robust to outcome model misspecification. 

\subsubsection{Double robustness when center-outcome associations are absent:}

Suppose that $\widehat e_a(X)$ and $\widehat p_{c}(X)$ converge in probability to well-defined limits that we denote as $\widehat e^{*}_a(X)$ and $\widehat p^{*}_{c}(X)$, respectively. Then as
$n \longrightarrow \infty$, 
\begin{equation}\label{eq:appendix:estimator_limit}
  \begin{split}
\widehat \psi(c, a) &\overset{p}{\longrightarrow} \dfrac{1}{\Pr[C = c]} \Bigg\{ \E \left[I(A = a)  \dfrac{p^{*}_{c}(X)}{ e^*_a(X)}  \Big\{ Y - g_a^*(X) \Big\} \right]   \\
&\quad\quad\quad+  \E \left[ I(C = c) g_a^*(X) \right]\Bigg\}. 
  \end{split}
\end{equation}

\vspace{0.1in}
\noindent
\emph{Case 1: correct specification of the model for the probability of participation in the target center and the treatment assignment model.} Assume that the models for $\Pr[C = c| X ]$ and $\Pr[A = a | X]$ are correctly specified, such that
\begin{align*}
\widehat p_{c}(X) &\overset{p}{\longrightarrow} p^{*}_{c}(X) = \Pr[ C = c | X ] \\
\widehat e_a(X) &\overset{p}{\longrightarrow} e^*_a(X) = \Pr[A = a | X].
\end{align*}
We do not, however, assume that the asymptotic limit $g_{a}^{*}(X)$ is equal to the corresponding target parameter.
To show this we can re-write the above expression as three terms inside the curly brackets and consider each term separately:
\begin{equation*}
    \begin{split}
        \widehat \psi(c, a) & \overset{p}{\longrightarrow} \dfrac{1}{\Pr[C = c]} \Bigg\{ \E \Bigg[ I(A = a)  \dfrac{p^{*}_{c}(X)}{ e^*_a(X)} Y\Bigg] - \E \Bigg[ I(A = a)  \dfrac{p^{*}_{c}(X)}{ e^*_a(X)}g_{a}^*(X) \Bigg] \\ &\quad\quad\quad+ \E \Bigg[I(C=c) g_{a}^*(X)  \Bigg]\Bigg\} .
    \end{split}
\end{equation*}

Consider the first term:

\begin{equation*}
	\begin{split}
\E \left[ I(A = a)  \dfrac{p^{*}_{c}(X)}{ e^*_a(X)}Y \right]
   & \stackrel{}{=} \E \Bigg[  \E \left[I(A = a)  \dfrac{p^{*}_{c}(X)}{ e^*_a(X)}Y | X \right] \Bigg]  \\
	& \stackrel{}{=} \E \big[ \E[Y|X, A=a] \Pr[C=c|X] \big]  \\
	& \stackrel{}{=}  \E \big[ \E[Y|X,A=a] |C=c \big] \Pr[C=c]  \\	
	& \stackrel{}{=} \psi(c, a) \Pr[C=c].
	\end{split}
\end{equation*}

Now, consider the second term:

\begin{equation*}
	\begin{split}
\E \left[ I(A = a)  \dfrac{p^{*}_{c}(X)}{ e^*_a(X)}g_{a}^*(X) \right]
   & \stackrel{}{=} \E \Bigg[ \E \left[ I(A = a)  \dfrac{p^{*}_{c}(X)}{ e^*_a(X)}g_{a}^*(X) |X \right] \Bigg]  \\
	& \stackrel{}{=} \E \big[  g_{a}^*(X) \Pr[C=c|X] \big]  \\
	& \stackrel{}{=} \E \big[ I(C=c) g_{a}^*(X) \big]  \\
	\end{split}
\end{equation*}

Combining the above results, we have $\widehat \psi(c, a) \overset{p}{\longrightarrow}  \psi(c, a) $ even when $g_{a}^*(X)  \neq \E[Y | X,A=a]$, and we can conclude that analyses are robust to outcome model misspecification.

\vspace{0.1in}
\noindent
\emph{Case 2: correct specification of the model for the outcome mean.} Assume that the model for $\E[Y|X, A = a]$ is correctly specified, such that
\begin{equation*}
\widehat g_{a}(X) \overset{p}{\longrightarrow} g_{a}^{*}(X) =\E[Y|X, A = a].
\end{equation*}
We do not, however, assume that the asymptotic limit $p^{*}_{c}(X)$ equals $\Pr[C =c | X ]$ or that the asymptotic limit $e_a^*(X)$ equals $\Pr[A = a | X]$. Using iterated expectation arguments for the first time on the right-hand-side of (\ref{eq:appendix:estimator_limit}) we obtain 
\begin{equation*}
\begin{split}
& \E \left[I(A = a)  \dfrac{p^{*}_{c}(X)}{e_a^*(X)}  \Big\{ Y - g_a^*(X) \Big\} \right] = 0.
\end{split}
\end{equation*}
and $\E[I(C=c)g_a^*(X)]=\Pr[C=c]\psi(c, a)$, so
$\widehat \psi(c, a) \overset{p}{\longrightarrow} \psi(c, a)$. Cases 1 and 2 establish the double robustness of $\widehat \psi(c, a)$.

\subsection{Asymptotic efficiency}
In this section of the Appendix we discuss how estimators of the quantities $\phi(c, a)$ and $\psi(c, a)$ compare in terms of efficiency. Any regular estimator of the functionals $\phi(c, a)$ and $\psi(c, a)$ has an asymptotic variance that is less than or equal to $\E\left[ \big( \mathit{\Phi}^1_{p_0}(c, a) \big)^2 \right]$ or $\E\left[ \big( \mathit{\Psi}^1_{p_0}(c, a) \big)^2  \right]$, respectively. Using this definition, the variance bound for $\phi(c, a)$ is 
\begin{equation}\label{eq:var_bound_phi}
  \begin{split}
  &\E\left[ \big( \mathit{\Phi}^1_{p_0}(c, a) \big)^2 \right] = \frac{1}{ (\Pr_{p_0}[C=c]){^2}}  \Bigg\{ \E \left[ \dfrac{I(C = c, A = a)}{\big( \Pr_{p_0}[A = a | X, C = c] \big)^2} \big\{ Y - \E_{p_0}[Y | X, C = c, A = a] \big\}^2 \right] \\
  &\quad\quad\quad\quad\quad\quad\quad\quad\quad\quad\quad\quad\quad + \E \left[ I(C = c) \big\{ \E_{p_0}[Y | X, C = c, A = a] - \phi_{p_0}(c, a) \big\}^2 \right] \Bigg \}.
  \end{split}
\end{equation}

And the variance bound for $\psi(c, a)$ is
\begin{equation}\label{eq:var_bound_psi}
  \begin{split}
  &\E\left[ \big( \mathit{\Psi}^1_{p_0}(c, a) \big)^2  \right] = \frac{1}{(\Pr_{p_0}[C=c]){^2}}  \Bigg\{ \E \left[ \dfrac{I(A = a)  \big(\Pr_{p_0}[C = c | X]\big)^2}{ \big( \Pr_{p_0}[A = a | X]\big)^2} \big\{ Y - \E_{p_0}[Y | X, A = a] \big\}^2 \right] \\
  &\quad\quad\quad\quad\quad\quad\quad\quad\quad\quad\quad + \E \left[ I(C = c) \big\{ \E_{p_0}[Y | X, A = a] - \psi_{p_0}(c, a) \big\}^2 \right] \Bigg \}.
  \end{split}
\end{equation}
It is interesting to compare the two bounds, because they represent a best-case comparison: provided we have estimators that are in the semiparametric efficiency sense optimal, then the bounds show the limit of their performance. In the remainder of this section, we omit the subscript ${p_0}$ for notational convenience.

When conditions A1 through A4 hold, $\phi(c, a)  = \psi(c, a)$ and $\E[Y | X, C = c, A = a]=\E[Y | X, A = a].$
Thus the second term inside the large brackets each of the expressions (\ref{eq:var_bound_phi}) and (\ref{eq:var_bound_psi}) is the same, 
\begin{equation*}
    \begin{split}
    &\E \left[ I(C = c) \big\{ \E[Y | X, C = c, A = a] - \phi(c, a) \big\}^2 \right]   \\ 
   &\quad\quad = \E \left[ I(C = c) \big\{ \E[Y | X, A = a] - \psi(c, a) \big\}^2 \right].
    \end{split}
\end{equation*}
We only need to compare the first term in the large brackets in expressions (\ref{eq:var_bound_phi}) and (\ref{eq:var_bound_psi}). An iterated expectation argument shows that the first such term, in expression (\ref{eq:var_bound_phi}) is equal to 
\begin{equation} \label{eq:first_term_var_bound_phi}
    \begin{split}    
&\E\left[ \dfrac{\E\big[ \big\{Y - \E[Y | X, C = c, A = a] \big\}^2 | X, C = c, A = a \big] \Pr[C = c | X]}{\Pr[A = a| X, C = c] } \right] \\
&\quad\quad\quad\quad\quad= \E\left[ \dfrac{\mbox{Var}[Y | X, C = c, A = a] \Pr[C = c | X]}{\Pr[A = a| X, C = c] } \right].
    \end{split}
\end{equation}
Similarly, the first term in in expression (\ref{eq:var_bound_psi}) is
\begin{equation}\label{eq:first_term_var_bound_psi}
    \begin{split}    
&\E\left[ \dfrac{\E\big[ \big\{Y - \E[Y | X, A = a] \big\}^2 | X, A = a\big] \big( \Pr[C = c | X]\big)^2}{\Pr[A = a| X] } \right] \\
&\quad\quad\quad\quad\quad= \E\left[ \dfrac{\mbox{Var}[Y | X, A = a] \big( \Pr[C = c | X]\big)^2}{\Pr[A = a| X] } \right].
    \end{split}
\end{equation}
By condition A4, however, we have that $$\mbox{Var}[Y | X, C = c, A = a]=\mbox{Var}[Y | X, A = a].$$ Furthermore, by the law of total probability, 
\begin{equation*}
    \begin{split}
    \Pr[A = a | X] &= {\Pr}[A =a | X, R_{(c)}=1] {\Pr}[R_{(c)}=1 | X] + {\Pr}[A = a | X, R_{(c)}=0] {\Pr}[ R_{(c)}=0| X] \\
  &= {\Pr}[A =a | X, R_{(c)}=1] {\Pr}[R_{(c)}=1 | X] + {\Pr}[A = a | X, C=c] {\Pr}[ C=c| X]. 
  \end{split}
\end{equation*}

These two facts and straightforward calculations, establish that 
$$ \E\left[ \dfrac{\mbox{Var}[Y | X, A = a] \big( \Pr[C = c | X]\big)^2}{\Pr[A = a| X] } \right] \leq \E\left[ \dfrac{\mbox{Var}[Y | X, C = c, A = a] \Pr[C = c | X]}{\Pr[A = a| X, C = c] } \right]. $$
It follows that $\E\left[ \big( \mathit{\Psi}^1_{p_0}(c, a) \big)^2  \right] \leq \E\left[ \big( \mathit{\Phi}^1_{p_0}(c, a) \big)^2 \right].$

\section{Inference}\label{appendix:inference}
To construct confidence intervals for the estimators $\widehat \phi(c, a)$ and $\widehat \psi(c, a)$, and the corresponding average treatment effect estimators, we can use the influence curve-based approximation of the standard error \citelatex{vanderLaan2003}.

\subsection{Center-outcome associations are present}
When center-outcome associations may be present, we can estimate the variance of the sampling distribution of $\widehat \phi(c,a)$ as $$\widehat \sigma^{2}_{\widehat \phi(c,a)}= \frac{1}{n}\widehat {\text{Var}}\left[\mathit{\widehat\Phi}^{1}(c,a)\right], $$ where $\widehat {\text{Var}}\left[ \widehat{\mathit{\Phi}}^{1}(c,a)\right]$ is the estimated variance of the influence curve, $\mathit{\widehat\Phi}^{1}(c,a)$, the sample analog of the influence function $\mathit{\Phi}^{1}(c,a)$. We can obtain a $(1- \alpha)\%$ confidence interval as $(\widehat \phi(c,a) \pm z_{1-\alpha/2} \times \widehat \sigma_{\widehat \phi(c,a)})$, where $z_{1-\alpha/2}$ is the $1- \alpha/2$ quantile of the normal distribution. 

Similarly, we can obtain the influence curve-based estimator of the standard error for the average treatment effect for $\widehat \delta_{\phi}(c,a,a^\prime)$. We can estimate the variance of the sampling distribution of $\widehat \delta_{\phi}(c,a,a^\prime)$ as $$\widehat \sigma^{2}_{\widehat \delta_{\phi}(c,a,a^\prime)}= \frac{1}{n}\widehat {\text{Var}}\left[\widehat{\mathit{\Delta}}^{1}_{\phi}(c,a, a^\prime)\right],$$
where $\widehat {\text{Var}}\left[\mathit{\widehat\Delta}^{1}_{\phi}(c,a, a^\prime)\right]$ is the estimated variance of the influence curve for the estimator of the center-specific average treatment effect. We can obtain a $(1- \alpha)\%$ confidence interval as $(\widehat \delta_{\phi}(c,a,a^\prime) \pm z_{1-\alpha/2} \times \widehat \sigma_{\widehat \delta_{\phi}(c,a,a^\prime)})$.
 
\subsection{Center-outcome associations are absent}
Similarly, when center-outcome associations are absent, we can estimate the variance of the sampling distribution of $\widehat \psi(c,a)$ as $$\widehat \sigma^{2}_{\widehat \psi(c,a)}= \frac{1}{n}\widehat {\text{Var}}\left[\mathit{\widehat \Psi}^{1}(c,a)\right], $$ where $\widehat {\text{Var}}\left[ \mathit{ \widehat \Psi}^{1}(c,a)\right]$ is the estimated variance of the influence curve. A $(1- \alpha)\%$ confidence interval can be obtained as $(\widehat \psi(c,a) \pm z_{1-\alpha/2} \times \widehat \sigma_{\widehat \psi(c,a)})$.

We can obtain the influence curve-based estimator of the standard error for the average treatment effect for $\widehat \delta_{\psi}(c,a,a^\prime)$ by estimating the variance of the sampling distribution of $\widehat \delta_{\psi}(c,a,a^\prime)$ as $$\widehat \sigma^{2}_{\widehat \delta_{\psi}(c,a,a^\prime)}= \frac{1}{n}\widehat {\text{Var}}\left[\widehat \Delta^{1}_{\widehat \psi}(c,a, a^\prime)\right].$$ We can obtain a $(1- \alpha)\%$ confidence interval as $(\widehat \delta_{\psi}(c,a,a^\prime) \pm z_{1-\alpha/2} \times \widehat \sigma_{\widehat \delta_{\psi}(c,a,a^\prime)})$.

\clearpage
\setcounter{figure}{0}
\setcounter{equation}{0}
\section{Estimation compared to ``transportability analyses''}\label{appendix:connection_transportability}
As shown in \ref{appendix:connection_transportability1}, under conditions A1 through A3, and A5, there are three ways to identify $\E[Y^a|C=c]$, using $\phi(c, a)$, $\phi(c,a)$ or $\chi(c,a)$. In this section, we describe a doubly robust estimator $\widehat \chi(c,a)$ for $\chi(c,a)$ that is closely related to estimators for transportability analyses in prior work by our group and others. We also compare aspects of the asymptotic behavior of $\widehat \chi(c,a)$ versus $\widehat \phi(c,a)$, when all necessary models are correctly specified.

To begin, recall that under conditions A1 through A3, and A5, $\E [Y^a|C=c ]$ is identified by $\chi(c, a) \equiv \E \big[ \E[ Y |X, R_{(c)}=1, A=a] |C=c  \big]$. 

\paragraph{Influence function for transportability analyses:} The influence function for $\chi(c, a)$ under the nonparametric model $\mathcal M_{\text{\tiny np}}$ is
\begin{equation*}
    \begin{split}
\mathcal{X}^1_{p_0}(c, a) = \dfrac{1}{\Pr_{p_0}[C=c]} \Bigg\{ \dfrac{I( R_{(c)} = 1 , A = a) \Pr_{p_0}[C = c | X] }{\Pr_{p_0}[R_{(c)} = 1| X]\Pr_{p_0}[A = a| X, R_{(c)} = 1]} \big\{ Y - \E_{p_0}[Y | X, R_{(c)} = 1, A = a]  \big\} \\ + I(C=c)\big\{ \E_{p_0}[Y|X, R_{(c)} = 1, A = a] - \chi_{p_0}(c, a) \big\} \Bigg\}.
    \end{split}
\end{equation*}

Similar to the arguments used for $\mathit\Phi^1_{p_0}(c, a)$, we can conclude that $\mathcal{X}^1_{p_0}(c, a)$, the unique (efficient) influence function for $\chi(c,a)$ under the nonparametric model, is also the efficient influence function under the semiparametric models $\mathcal M_{\text{\tiny semi}}^{*}$ and $\mathcal M_{\text{\tiny semi}}^{**}$.

\paragraph{Efficiency:}
The variance bound for $\chi(c, a)$ is
\begin{equation}\label{eq:var_bound_chi}
  \begin{split}
  &\E\left[ \big( \mathcal{X}^1_{p_0}(c, a) \big)^2 \right] = \\
  &\quad \frac{1}{(\Pr_{p_0}[C=c])^2}  \Bigg\{ \E \left[ \dfrac{I(R_{(c)}=1, A = a) \big(\Pr_{p_0}[ C = c | X]\big)^2}{\big(\Pr_{p_0}[R_{(c)}=1| X] \Pr_{p_0}[A = a | X, R_{(c)}=1]\big)^2}  \big\{ Y - \E_{p_0}[Y | X, R_{(c)}=1, A = a] \big\}^2 \right] \\
  &\quad\quad\quad\quad\quad\quad\quad\quad\quad\quad\quad\quad + \E \left[ I(C = c) \big\{ \E_{p_0}[Y | X, R_{(c)}=1, A = a] - \chi_{p_0}(c, a) \big\}^2 \right] \Bigg \}
  \end{split}
\end{equation}

We will now compare the bounds in expressions (\ref{eq:var_bound_phi}) and (\ref{eq:var_bound_psi}) against the bound in expression (\ref{eq:var_bound_chi}) when conditions A1 through A5 hold, such that $\phi(c, a) = \phi(c,a)= \chi(c,a)$ and any of these functionals can be used to identify $\E[Y^a | C = c]$. 

Under conditions A1 through A5, because $\phi(c, a) = \phi(c,a)= \chi(c,a)$ and $\E[Y | X, C = c, A = a]=\E[Y | X, R_{(c)}=1, A = a]=\E[Y | X, A = a]$, the second term inside the large brackets in each of the expressions (\ref{eq:var_bound_phi}),  (\ref{eq:var_bound_psi}), and (\ref{eq:var_bound_chi}) is the same, so we only need to compare the first term inside the large brackets in these expressions. 

An iterated expectation argument shows that the first such term, in expression  (\ref{eq:var_bound_chi}) is
\begin{equation}
    \begin{split}    
&\E\left[ \dfrac{\E\big[ \big\{Y - \E[Y | X, R_{(c)}=1, A = a] \big\}^2 | X, R_{(c)}=1, A = a\big] \big( \Pr[C = c | X]  \big)^2}{ \Pr[R_{(c)}=1 | X] \Pr[A = a| X, R_{(c)}=1] } \right] \\
&\quad\quad\quad\quad\quad= \E\left[ \dfrac{\mbox{Var}[Y | X, R_{(c)}=1, A = a] \big( \Pr[C = c | X]  \big)^2}{\Pr[R_{(c)}=1 | X]\Pr[A = a| X, R_{(c)}=1] } \right].
    \end{split}
\end{equation}
As shown in the previous subsection, using an iterated expectation argument, the first term in expression (\ref{eq:var_bound_phi}) can be written as (\ref{eq:first_term_var_bound_phi}); likewise the first such term in expression (\ref{eq:var_bound_psi}) can be written as (\ref{eq:first_term_var_bound_psi}).

By condition A4, we have that $$\mbox{Var}[Y | X, R_{(c)}=1, A = a]=\mbox{Var}[Y | X, C = c, A = a]=\mbox{Var}[Y | X, A = a].$$ Furthermore, as noted earlier, by the law of total probability 
\begin{equation*} 
    \begin{split}
    {\Pr}[A = a | X] &= {\Pr}[A =a | X, R_{(c)}=1] {\Pr}[R_{(c)}=1 | X] + {\Pr}[A = a | X, R_{(c)}=0] {\Pr}[ R_{(c)}= 0 | X] \\
  &= {\Pr}[A =a | X, R_{(c)}=1] {\Pr}[R_{(c)}=1 | X] + {\Pr}[A = a | X, C=c] {\Pr}[ C=c| X]. 
    \end{split}
\end{equation*}
Consequently these two facts, and straightforward calculations, establish that $$\E\left[ \big( \mathit{\Psi}^1_{p_0}(c, a) \big)^2  \right] \leq \E\left[ \big( \mathcal{X}^1_{p_0}(c, a) \big)^2  \right], $$
but $\E\left[ \big( \mathit{\Phi}^1_{p_0}(c, a) \big)^2 \right] $ and $\E\left[ \big( \mathcal{X}^1_{p_0}(c, a) \big)^2  \right]$ cannot be ordered without additional assumptions.

\clearpage
\setcounter{figure}{0}
\setcounter{equation}{0}
\section{Additional analyses of the simulation}\label{appendix:simulation}
Here we report additional analyses from the simulation study and additional simulation details. We list the values for the original coefficients for $\beta_k$, for $k=2$ to $k=10$, containing the set of regression coefficients for center level $k$. In the scenario with stronger effect modification and selection, we double the coefficient values for $\beta_{k,3}$.

\vspace{0.15in}
\begin{table}[!ht]
\caption{Coefficients used in the multinomial model for center membership.}
    \centering
\begin{tabular}{lccccccccc}
\toprule
\multirow{2}{*}{\begin{tabular}[c]{@{}l@{}}Element of \\ of $\beta_k$\end{tabular}} & \multicolumn{9}{c}{$k$}                                               \\ \cline{2-10} 
                                                                                  & 2     & 3     & 4     & 5     & 6     & 7     & 8     & 9     & 10    \\ \midrule
$\beta_{k,0}$                                                                     & 0.75  & 1.03  & 0.36  & 0.48  & 0.75  & 0.65  & 0.76  & -0.09 & 1.46  \\
$\beta_{k,1}$                                                                     & -0.36 & -0.18 & -0.32 & -0.13 & -0.47 & -0.42 & -0.52 & -0.4  & -0.19 \\
$\beta_{k,2}$                                                                     & -0.14 & 0.01  & -0.04 & -0.18 & 0.15  & -0.24 & -0.12 & -0.09 & -0.16 \\
$\beta_{k,3}$                                                                     & 0.36  & 0.18  & 0.44  & 0.35  & 0.34  & 0.37  & 0.34  & 0.26  & 0.28  \\ \hline
\end{tabular}
 \caption*{Coefficients used in the multinomial model for center membership, for  $\beta_k= (\beta_{k,0} \ldots, \beta_{k,3})$, for $k=2$ to $k=10$. Here $\beta_{k,0}$ is the intercept in center $k$; $\beta_{k,1}$ is the coefficient of $X_1$ for membership in center $k$; $\beta_{k,2}$ is the coefficient of $X_2$ for membership in center $k$; $\beta_{k,3}$ is the coefficient of $X_3$ for membership in center $k$. }

\end{table}

\subsection{Scenario with stronger effect modification and selection}
In the main text, we summarized the center-specific bias and center-specific MSE for the scenario with stronger effect modification and selection, for all estimators. Here we focus on the proposed methods that explicitly target the center-specific average treatment effect. Appendix Figure \ref{fig:sim_bias_scenario2} shows that all proposed estimators are relatively unbiased and that $\widehat \delta_{\psi(c, 1,0)}$ has the smallest variability around the true value. Appendix Table \ref{Table:sim_coverage_scenario2} shows the coverage, average confidence interval width, and average standard error using the influence curve based estimator of the sampling variance for $\widehat \delta_{\phi}(c, 1,0)$ and $\widehat \delta_{\psi}(c, 1,0)$; or using the standard error estimator for the coefficient of treatment from ordinary least squares regression for $\widehat \delta_{\tau}(c, 1,0)$. For all proposed estimators, coverage of the Wald-style intervals centered around our estimators was close to nominal. As described in the main text, we found that $\widehat \delta_{\psi}(c,1,0)$ always had the smallest average standard error and average confidence width, followed by $\widehat \delta_{\phi}(c,1,0)$; $\widehat \delta_{\tau}(c,1,0)$ had much larger standard error and wider confidence intervals. Appendix Table \ref{Table:sim_SE_scenario2} compares the average of the estimated standard errors to the empirical (Monte Carlo) standard deviation of the estimators over 1000 runs of the simulation. For both estimators, the average of the influence curve based estimated standard errors was almost identical to the average of the estimated standard error using the sandwich; both averages were similar to the estimated standard deviation of the estimator.

\begin{figure}[!ht]
    \caption{Boxplots of deviations from the true center-specific average treatment effect.}
    \centering
    \includegraphics[width=17cm]{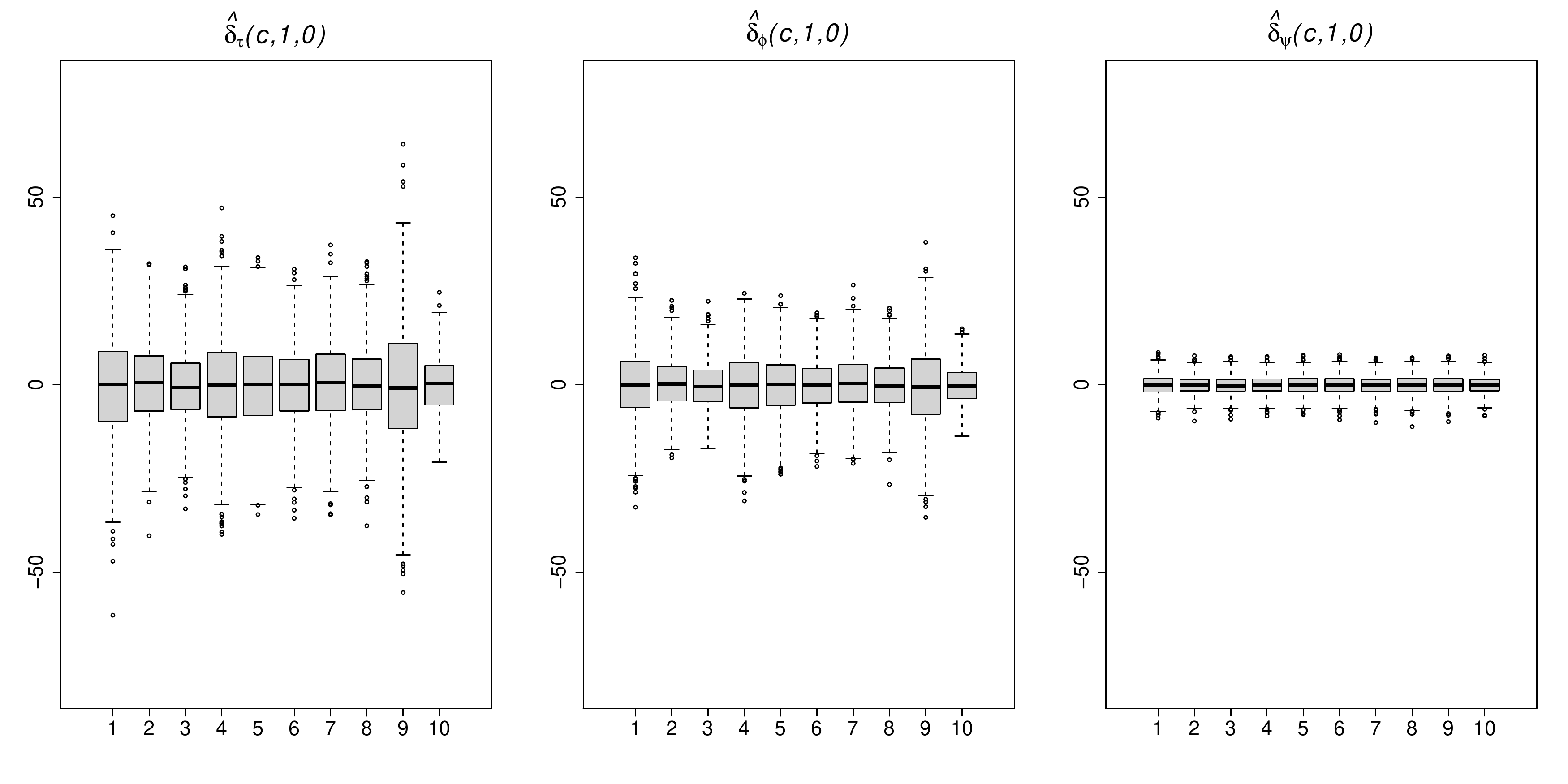}
    \label{fig:sim_bias_scenario2}
\end{figure}

\begin{table}[!ht]
\centering
 \caption{Estimates of coverage, average CI width, and average SE using 95\% Wald-style confidence intervals based on the influence curve in the scenario with stronger effect modification.}
\label{Table:sim_coverage_scenario2}
\resizebox{\textwidth}{!}{\begin{tabular}{@{}ccccccccccc@{}}
\toprule
\\[-7pt]
\multicolumn{1}{c}{\multirow{2}{*}{Center, $c$}} & \multicolumn{1}{c}{\multirow{2}{*}{Average $n_c$}} & \multicolumn{3}{c}{Coverage}                                                                           & \multicolumn{3}{c}{Average CI width}                                                                   & \multicolumn{3}{c}{Average SE}                                                                         \\ \cline{3-11}   \\[-7pt]
\multicolumn{1}{c}{}                             & \multicolumn{1}{l}{}                               & $\widehat \delta_{\tau(c, 1,0)}$ & $\widehat \delta_{\phi(c, 1,0)}$ & $\widehat \delta_{\psi(c, 1,0)}$ & $\widehat \delta_{\tau(c, 1,0)}$ & $\widehat \delta_{\phi(c, 1,0)}$ & $\widehat \delta_{\psi(c, 1,0)}$ & $\widehat \delta_{\tau(c, 1,0)}$ & $\widehat \delta_{\phi(c, 1,0)}$ & $\widehat \delta_{\psi(c, 1,0)}$ \\[5pt] \midrule
1 & 57 & 0.95 & 0.94 & 0.95 & 59.53 & 42.35 & 23.39 & 15.19 & 10.8 & 5.97 \\
2 & 100 & 0.96 & 0.95 & 0.94 & 45.06 & 32.24 & 18.3 & 11.5 & 8.22 & 4.67 \\
3 & 135 & 0.95 & 0.94 & 0.94 & 38.55 & 27.71 & 16.49 & 9.83 & 7.07 & 4.21 \\
4 & 68 & 0.94 & 0.94 & 0.94 & 54.79 & 39.12 & 21.26 & 13.98 & 9.98 & 5.42 \\
5 & 80 & 0.95 & 0.94 & 0.95 & 50.52 & 36.08 & 20.16 & 12.89 & 9.21 & 5.14 \\
6 & 107 & 0.96 & 0.96 & 0.95 & 43.39 & 31.01 & 18.15 & 11.07 & 7.91 & 4.63 \\
7 & 94 & 0.95 & 0.94 & 0.94 & 46.43 & 33.32 & 18.88 & 11.85 & 8.5 & 4.82 \\
8 & 110 & 0.95 & 0.95 & 0.96 & 42.96 & 30.79 & 18.02 & 10.96 & 7.85 & 4.6 \\
9 & 43 & 0.94 & 0.93 & 0.96 & 68.71 & 48.66 & 25.8 & 17.53 & 12.41 & 6.58 \\
10 & 206 & 0.96 & 0.96 & 0.96 & 31.27 & 22.5 & 14.37 & 7.98 & 5.74 & 3.67 \\ \bottomrule \bottomrule
\end{tabular}}
\caption*{Average $n_c$ is the average sample size in center $c$ across simulations. Average CI width = average confidence interval width across simulations; Average SE = standard error from the influence curve (square root of the estimated sampling variance) across simulations.}
\end{table}

\begin{table}[!ht]
\centering
 \caption{Average of the estimated standard errors to the empirical (Monte Carlo) standard deviation in the scenario with stronger effect modification.}
\label{Table:sim_SE_scenario2}
\resizebox{\textwidth}{!}{\begin{tabular}{@{}cccccccc@{}}
\toprule
\\[-7pt]
\multicolumn{1}{c}{\multirow{2}{*}{Center, $c$}} & \multicolumn{1}{c}{\multirow{2}{*}{Average $n_c$}} & \multicolumn{2}{c}{Influence curve}                                                                           & \multicolumn{2}{c}{Sandwich}                                                                   & \multicolumn{2}{c}{Empirical}                                                                         \\ \cline{3-8}   \\[-7pt]
\multicolumn{1}{c}{}                             & \multicolumn{1}{l}{}                             & $\widehat \delta_{\phi(c, 1,0)}$ & $\widehat \delta_{\psi(c, 1,0)}$  & $\widehat \delta_{\phi(c, 1,0)}$ & $\widehat \delta_{\psi(c, 1,0)}$  & $\widehat \delta_{\phi(c, 1,0)}$ & $\widehat \delta_{\psi(c, 1,0)}$ \\[5pt] \midrule
1 & 57 & 10.80 & 5.97 & 10.80 & 5.97 & 10.81 & 6.00 \\
2 & 100 & 8.22 & 4.67 & 8.22 & 4.66 & 8.01 & 4.74 \\
3 & 135 & 7.07 & 4.21 & 7.07 & 4.21 & 7.27 & 4.37 \\
4 & 68 & 9.98 & 5.42 & 9.97 & 5.42 & 10.25 & 5.60 \\
5 & 80 & 9.21 & 5.14 & 9.20 & 5.14 & 9.43 & 5.20 \\
6 & 107 & 7.91 & 4.63 & 7.91 & 4.63 & 7.83 & 4.61 \\
7 & 94 & 8.50 & 4.82 & 8.49 & 4.81 & 8.66 & 4.91 \\
8 & 110 & 7.85 & 4.60 & 7.85 & 4.60 & 7.74 & 4.57 \\
9 & 43 & 12.41 & 6.58 & 12.41 & 6.59 & 12.48 & 6.40 \\
10 & 206 & 5.74 & 3.67 & 5.74 & 3.67 & 5.67 & 3.62 \\ \bottomrule
\end{tabular}}
\caption*{Average $n_c$ is average sample size in center $c$ across simulations. Influence curve = average (over simulation runs) of the estimated standard error using the influence curve (square root of the estimated sampling variance) across simulations; Sandwich = average (over simulation runs) of the standard error using the sandwich estimator of the variance; Empirical = Monte Carlo standard deviation over the estimator across simulation runs.}
\end{table}

\clearpage

\subsection{Scenario with weaker effect modification and selection}

Here, we report the results for the scenario with weaker (baseline) effect modification and selection. In general, the results were similar to the scenario with stronger effect modification and selection. The estimators for center-specific average treatment effects were approximately unbiased and showed the same MSE pattern as in the scenario with stronger effect modification and section (presented in the main text). The pooled, FE1, and FE2 estimators were still biased (for most centers), but they had less bias and lower MSE compared to the scenario with stronger effect modification and selection. Detailed results are presented in Appendix Table \ref{Table:sim_bias_scenario1}, Appendix Table \ref{Table:sim_MSE_scenario1}, and Appendix Figure \ref{fig:sim_bias_scenario1}. Appendix Table \ref{Table:sim_coverage_scenario1} shows that patterns in coverage, average confidence interval width, and average standard error comparing center-specific average treatment effect estimators were similar to those in the scenario with stronger effect modification and selection. Appendix \ref{Table:sim_SE_scenario2} compares the average of the estimated standard errors to the empirical (Monte Carlo) standard deviation of the estimators over 1000 runs of the simulation. Again, the patterns were similar to those in the scenario with stronger effect modification and selection. 

\vspace{0.2in}
\begin{table}[!ht]
\centering
 \caption{Bias results from the simulation for each method.}
\label{Table:sim_bias_scenario1}
\begin{tabular}{@{}cccccccc@{}}
\toprule
Center, $c$ & Average $n_c$ & $\widehat \delta_{\tau}(c,1,0)$ & $\widehat \delta_{\phi}(c,1,0)$ & $\widehat \delta_{\psi}(c,1,0)$  & Pooled & FE1 & FE2  \\ \midrule
1 & 52 & -0.09 & 0.15 & -0.17 & 6.00 & 6.00 & 5.96 \\
2 & 103 & -0.04 & 0.14 & -0.14 & -1.37 & -1.38 & -1.41 \\
3 & 137 & -0.10 & -0.20 & -0.16 & 2.21 & 2.20 & 2.17 \\
4 & 69 & -0.94 & -0.38 & -0.11 & -0.59 & -0.59 & -0.63 \\
5 & 79 & -0.61 & -0.06 & -0.14 & 3.32 & 3.32 & 3.28 \\
6 & 106 & -0.51 & -0.19 & -0.13 & -3.59 & -3.60 & -3.63 \\
7 & 94 & 0.08 & -0.35 & -0.16 & -2.54 & -2.54 & -2.58 \\
8 & 106 & 0.41 & 0.16 & -0.15 & -4.72 & -4.73 & -4.76 \\
9 & 44 & 0.81 & 0.23 & -0.15 & -2.33 & -2.34 & -2.37 \\
10 & 209 & -0.22 & -0.35 & -0.15 & 2.03 & 2.03 & 1.99 \\ \bottomrule
\end{tabular}
\caption*{Average $n_c$ is the average sample size in center $c$ across simulations. See the main text for the definitions of the pooled, FE1, and FE2 estimators.}
\end{table}

\begin{table}[!ht]
\centering
 \caption{MSE results from the simulation for each method.}
\label{Table:sim_MSE_scenario1}
\begin{tabular}{@{}cccccccc@{}}
\toprule
Center, $c$ & Average $n_c$ & $\widehat \delta_{\tau}(c,1,0)$ & $\widehat \delta_{\phi}(c,1,0)$ & $\widehat \delta_{\psi}(c,1,0)$  & Pooled & FE1 & FE2  \\ \midrule
1 & 52 & 298.58 & 117.28 & 14.68 & 52.32 & 52.27 & 41.03 \\
2 & 103 & 147.41 & 52.75 & 9.49 & 18.16 & 18.20 & 7.45 \\
3 & 137 & 114.44 & 39.95 & 8.90 & 21.15 & 21.14 & 10.16 \\
4 & 69 & 241.36 & 90.27 & 11.83 & 16.63 & 16.66 & 5.86 \\
5 & 79 & 206.60 & 66.13 & 9.79 & 27.33 & 27.31 & 16.25 \\
6 & 106 & 158.16 & 57.47 & 9.90 & 29.20 & 29.26 & 18.67 \\
7 & 94 & 186.56 & 58.51 & 9.68 & 22.72 & 22.77 & 12.11 \\
8 & 106 & 144.92 & 54.74 & 10.04 & 38.60 & 38.68 & 28.17 \\
9 & 44 & 385.45 & 138.66 & 16.10 & 21.73 & 21.78 & 11.10 \\
10 & 209 & 80.86 & 28.96 & 6.89 & 20.41 & 20.41 & 9.44 \\ \bottomrule
\end{tabular}
\caption*{Average $n_c$ is the average sample size in center $c$ across simulations. See the main text for the definitions of the pooled, FE1, and FE2 estimators.}
\end{table}

\begin{figure}[!ht]
    \caption{Boxplots of deviations from the true center-specific average treatment effect.}
    \centering
    \includegraphics[width=17cm]{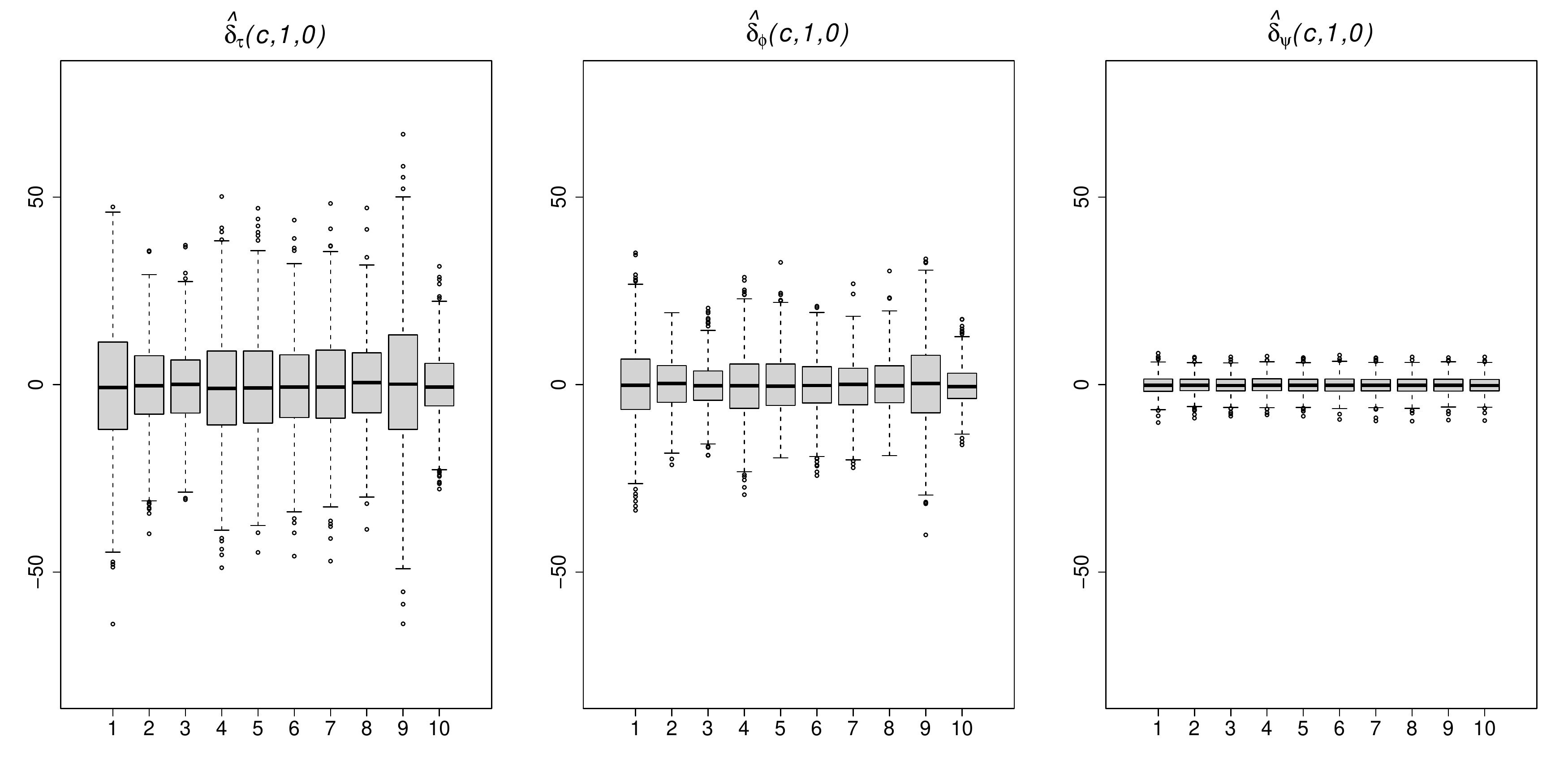}
    \label{fig:sim_bias_scenario1}
\end{figure}

\begin{table}[!ht]
\centering
 \caption{Estimates of coverage, average CI width, and average SE using $95\%$ Wald-style confidence intervals based on the influence curve in the scenario with baseline effect modification.} \label{Table:sim_coverage_scenario1}
\resizebox{\textwidth}{!}{\begin{tabular}{@{}ccccccccccc@{}}
\toprule
\\[-7pt]
\multicolumn{1}{c}{\multirow{2}{*}{Center, $c$}} & \multicolumn{1}{c}{\multirow{2}{*}{Average $n_c$}} & \multicolumn{3}{c}{Coverage}                                                                           & \multicolumn{3}{c}{Average CI width}                                                                   & \multicolumn{3}{c}{Average SE}                                                                         \\ \cline{3-11}   \\[-7pt]
\multicolumn{1}{c}{}                             & \multicolumn{1}{l}{}                               & $\widehat \delta_{\tau(c, 1,0)}$ & $\widehat \delta_{\phi(c, 1,0)}$ & $\widehat \delta_{\psi(c, 1,0)}$ & $\widehat \delta_{\tau(c, 1,0)}$ & $\widehat \delta_{\phi(c, 1,0)}$ & $\widehat \delta_{\psi(c, 1,0)}$ & $\widehat \delta_{\tau(c, 1,0)}$ & $\widehat \delta_{\phi(c, 1,0)}$ & $\widehat \delta_{\psi(c, 1,0)}$ \\[5pt] \midrule
1 & 52 & 0.96 & 0.92 & 0.96 & 69.21 & 40.1 & 15.12 & 17.66 & 10.23 & 3.86 \\
2 & 103 & 0.95 & 0.95 & 0.95 & 49.21 & 29.01 & 12.18 & 12.55 & 7.4 & 3.11 \\
3 & 137 & 0.95 & 0.95 & 0.94 & 42.71 & 25.03 & 11.56 & 10.9 & 6.39 & 2.95 \\
4 & 69 & 0.94 & 0.92 & 0.94 & 59.96 & 34.91 & 13.49 & 15.3 & 8.91 & 3.44 \\
5 & 79 & 0.94 & 0.96 & 0.96 & 55.95 & 32.77 & 13.07 & 14.27 & 8.36 & 3.33 \\
6 & 106 & 0.95 & 0.94 & 0.95 & 48.46 & 28.32 & 12.36 & 12.36 & 7.22 & 3.15 \\
7 & 94 & 0.94 & 0.95 & 0.95 & 51.49 & 30.07 & 12.54 & 13.14 & 7.67 & 3.2 \\
8 & 106 & 0.96 & 0.94 & 0.94 & 48.44 & 28.5 & 12.25 & 12.36 & 7.27 & 3.13 \\
9 & 44 & 0.94 & 0.94 & 0.95 & 75.56 & 43.67 & 15.45 & 19.28 & 11.14 & 3.94 \\
10 & 209 & 0.94 & 0.93 & 0.96 & 34.42 & 20.26 & 10.68 & 8.78 & 5.17 & 2.73 \\ \bottomrule
\end{tabular}}
\caption*{Average $n_c$ is the average sample size in center $c$ across simulations. Average CI width = average confidence interval width across simulation; Average SE = standard error from the influence curve (square root of the estimated sampling variance) across simulations.}
\end{table}

\begin{table}[!ht]
\centering
 \caption{Average of the estimated standard errors to the empirical (Monte Carlo) standard deviation in the scenario with stronger effect modification.}
\label{Table:sim_SE_scenario1}
\resizebox{\textwidth}{!}{\begin{tabular}{@{}cccccccc@{}}
\toprule
\\[-7pt]
\multicolumn{1}{c}{\multirow{2}{*}{Center, $c$}} & \multicolumn{1}{c}{\multirow{2}{*}{Average $n_c$}} & \multicolumn{2}{c}{Influence curve}                                                                           & \multicolumn{2}{c}{Sandwich}                                                                   & \multicolumn{2}{c}{Empirical}                                                                         \\ \cline{3-8}   \\[-7pt]
\multicolumn{1}{c}{}                             & \multicolumn{1}{l}{}                             & $\widehat \delta_{\phi(c, 1,0)}$ & $\widehat \delta_{\psi(c, 1,0)}$  & $\widehat \delta_{\phi(c, 1,0)}$ & $\widehat \delta_{\psi(c, 1,0)}$  & $\widehat \delta_{\phi(c, 1,0)}$ & $\widehat \delta_{\psi(c, 1,0)}$ \\[5pt] \midrule
1 & 52 & 10.23 & 3.86 & 10.23 & 3.86 & 10.83 & 3.83 \\
2 & 103 & 7.40 & 3.11 & 7.40 & 3.11 & 7.26 & 3.08 \\
3 & 137 & 6.39 & 2.95 & 6.38 & 2.95 & 6.32 & 2.98 \\
4 & 69 & 8.91 & 3.44 & 8.90 & 3.44 & 9.49 & 3.44 \\
5 & 79 & 8.36 & 3.33 & 8.36 & 3.33 & 8.13 & 3.13 \\
6 & 106 & 7.22 & 3.15 & 7.22 & 3.16 & 7.58 & 3.14 \\
7 & 94 & 7.67 & 3.20 & 7.66 & 3.20 & 7.64 & 3.11 \\
8 & 106 & 7.27 & 3.13 & 7.26 & 3.13 & 7.40 & 3.16 \\
9 & 44 & 11.14 & 3.94 & 11.14 & 3.94 & 11.77 & 4.01 \\
10 & 209 & 5.17 & 2.73 & 5.17 & 2.72 & 5.37 & 2.62 \\ \bottomrule
\end{tabular}}
\caption*{Average $n_c$ is average sample size in center $c$ across simulations. Influence curve = average (over simulation runs) of the estimated standard error using the influence curve (square root of the estimated sampling variance) across simulations; Sandwich = average (over simulation runs) of the standard error using the sandwich estimator of the variance; Empirical = Monte Carlo standard deviation over the estimator across simulation runs.}
\end{table}

\clearpage
\setcounter{figure}{0}
\setcounter{equation}{0}
\section{Additional results from the HALT-C trial}\label{appendix:haltc}

\paragraph{Stability analysis using generalized additive models:} We used generalized additive models (GAMs) to estimate models for the outcome and participation, using the main effects of all baseline covariates (we always estimated the model for the probability of treatment using parametric logistic regression because that model cannot be misspecified). We used the mgcv package in R \citelatex{wood2001mgcv}, which can automatically find a smoothness penalty, and considered splines for all continuous covariates. When modeling the expectation of the outcome, we used the identity link and all package defaults, which fits penalized thin plate regression splines and allows up to 10 for $k$, the dimension of the basis. When modeling the probability of participation, we used a multinomial link, but had to fit unpenalized thin plate regression splines with the dimension of the basis, $k$ set to 5 because of convergence issues (flexibility of the splines had to be limited due to small center size). We obtained non-parametric bootstrap-based standard errors to form 95\% Wald confidence intervals for the center-specific treatment effects (1,000 resampled datasets). We tuned and refit all working models in each bootstrap sample.


\begin{figure}[!ht]
    \caption{Forest plot from stability analysis using the HALT-C trial data.}
    \centering
    \includegraphics[width=11cm]{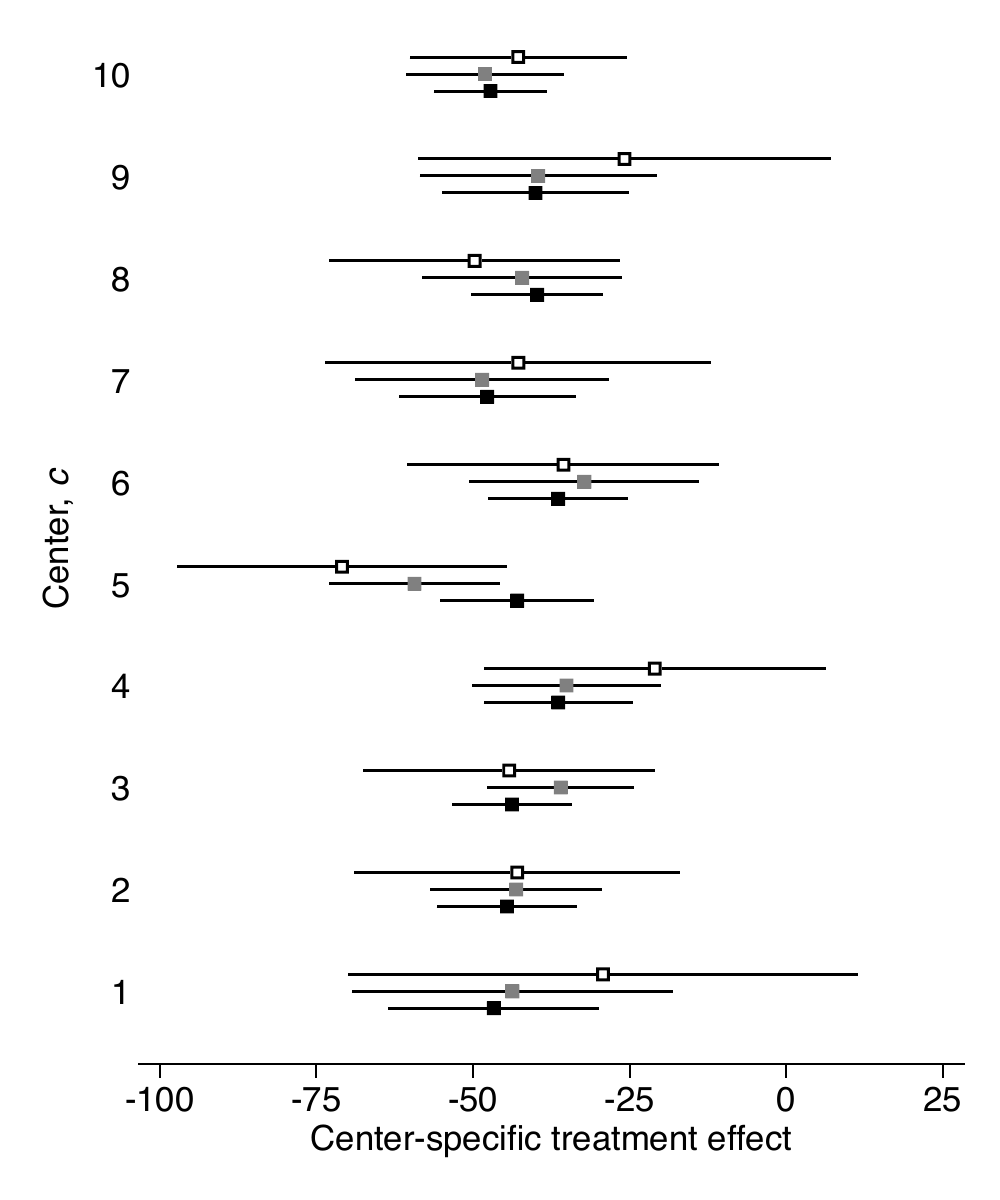}
    \caption*{ATE, center-specific average treatment effects. \\
    Point estimates (square markers) and 95\% confidence intervals (extending lines). White squares represent the crude analysis using the estimator in display \eqref{eq:tau_APP}; grey squares represent the adjusted analysis  when center-outcome associations may be present,  using the estimator in display \eqref{eq:estimatorphi2_APP}; black squares represent the adjusted analysis  when center-outcome associations are absent, using the estimator in display \eqref{eq:estimatorpsi_APP}.}
    \label{fig:forest_plot_appendix}
\end{figure}


\begin{landscape}
\begin{table}[ht]
\centering
 \caption{Baseline characteristics in the HALT-C trial, stratified by center, for centers 1 through 5.} \label{Table:Appendix_centers_set1}
\begin{tabular}{@{}llllll@{}}
\toprule
                  Center, $c$    & 1               & 2               & 3               & 4               & 5               \\ \midrule
Number of individuals                       & 48              & 97              & 130             & 66              & 76              \\
Baseline platelets, $\times$ 1000/mm$^3$   & 175.17 (77.85)  & 162.65 (67.50)  & 168.19 (68.58)  & 167.64 (58.12)  & 178.58 (64.82)  \\
Age in years    & 51.33 (7.04)    & 50.23 (6.56)    & 51.34 (6.40)    & 50.83 (8.35)    & 49.86 (6.57)    \\
Female      & 11 (22.9)       & 29 (29.9)       & 32 (24.6)       & 21 (31.8)       & 23 (30.3)       \\
Received pegylated interferon          & 5 (10.4)        & 25 (25.8)       & 33 (25.4)       & 15 (22.7)       & 19 (25.0)       \\
White           & 28 (58.3)       & 75 (77.3)       & 97 (74.6)       & 47 (71.2)       & 67 (88.2)       \\
Baseline white blood cell count, $\times$ 1000/mm$^3$      & 5.47 (1.78)     & 5.80 (1.92)     & 5.61 (1.81)     & 5.98 (2.08)     & 6.02 (1.57)     \\
Used recreational drugs     & 25 (52.1)       & 46 (47.4)       & 66 (50.8)       & 31 (47.0)       & 33 (43.4)       \\
Received a transfusion      & 16 (33.3)       & 41 (42.3)       & 52 (40.0)       & 24 (36.4)       & 22 (28.9)       \\
Body mass index, weight (kg)/height(m)$^2$        & 30.13 (6.20)    & 29.89 (6.19)    & 29.25 (4.66)    & 29.98 (6.09)    & 30.18 (5.38)    \\
Creatinine, mg/dl     & 0.80 (0.19)     & 0.81 (0.13)     & 0.81 (0.15)     & 0.81 (0.16)     & 0.84 (0.17)     \\
Ever smoked      & 37 (77.1)       & 73 (75.3)       & 100 (76.9)      & 48 (72.7)       & 60 (78.9)       \\
Received interferon and ribavirin    & 44 (91.7)       & 81 (83.5)       & 111 (85.4)      & 45 (68.2)       & 64 (84.2)       \\
Reported diabetes        & 10 (20.8)       & 25 (25.8)       & 13 (10.0)       & 8 (12.1)        & 11 (14.5)       \\
Serum ferritin, ng/ml   & 429.42 (432.51) & 303.35 (317.08) & 436.21 (518.96) & 294.06 (307.87) & 289.01 (251.52) \\
Ultrasound evidence of splenomegaly        & 12 (25.0)       & 30 (30.9)       & 42 (32.3)       & 27 (40.9)       & 28 (36.8)       \\
Ever drank alcohol     & 32 (66.7)       & 85 (87.6)       & 121 (93.1)      & 51 (77.3)       & 66 (86.8)       \\
Hemoglobin, g/dl   & 14.91 (1.31)    & 14.80 (1.18)    & 15.64 (1.46)    & 14.93 (1.43)    & 14.88 (1.38)    \\
Aspartate aminotransferase, U/L         & 92.94 (58.34)   & 96.32 (74.43)   & 85.11 (70.92)   & 97.33 (66.24)   & 81.11 (72.95)   \\ \bottomrule
\end{tabular}
\caption*{Results reported as mean (standard deviation) for continuous variables and count (percentage) for binary variables.\\
$A$, indicates randomization to treatment with peginterferon alfa-2a ($A=1$) versus no treatment ($A=0$); kg, kilogram; m, meter; mg, milligram; dl, deciliter; ml, milliliter; g, gram; U/L, units per liter.}
\end{table}
\end{landscape}

\begin{landscape}
\begin{table}[ht]
\centering
 \caption{Baseline characteristics in the HALT-C trial, stratified by center, for centers 6 through 10.} \label{Table:Appendix_centers_set2}
\begin{tabular}{@{}llllll@{}}
\toprule
                  Center, $c$        & 6               & 7               & 8               & 9               & 10              \\ \midrule
Number of individuals                       & 101             & 89              & 100             & 42              & 199             \\
Baseline platelets, $\times$ 1000/mm$^3$    & 154.53 (60.46)  & 160.08 (68.25)  & 152.68 (52.83)  & 157.31 (70.46)  & 171.30 (66.10)  \\
Age in years    & 52.42 (8.35)    & 49.54 (6.73)    & 50.44 (7.38)    & 50.64 (7.54)    & 50.06 (7.53)    \\
Female      & 30 (29.7)       & 24 (27.0)       & 32 (32.0)       & 9 (21.4)        & 60 (30.2)       \\
Received pegylated interferon       & 35 (34.7)       & 19 (21.3)       & 40 (40.0)       & 15 (35.7)       & 66 (33.2)       \\
White          & 64 (63.4)       & 75 (84.3)       & 72 (72.0)       & 35 (83.3)       & 115 (57.8)      \\
Baseline white blood cell count, $\times$ 1000/mm$^3$       & 5.61 (1.79)     & 5.79 (1.98)     & 5.63 (1.81)     & 5.58 (1.77)     & 5.81 (1.92)     \\
Used recreational drugs     & 41 (40.6)       & 38 (42.7)       & 41 (41.0)       & 17 (40.5)       & 91 (45.7)       \\
Received a transfusion      & 41 (40.6)       & 42 (47.2)       & 44 (44.0)       & 16 (38.1)       & 77 (38.7)       \\
Body mass index, weight (kg)/height(m)$^2$       & 29.65 (5.85)    & 30.67 (5.12)    & 29.88 (5.55)    & 30.44 (5.29)    & 29.93 (5.34)    \\
Creatinine, mg/dl     & 0.91 (0.18)     & 0.92 (0.16)     & 0.83 (0.15)     & 0.90 (0.17)     & 0.84 (0.17)     \\
Ever smoked       & 74 (73.3)       & 68 (76.4)       & 69 (69.0)       & 29 (69.0)       & 155 (77.9)      \\
Received interferon and ribavirin     & 83 (82.2)       & 80 (89.9)       & 85 (85.0)       & 34 (81.0)       & 151 (75.9)      \\
Reported diabetes         & 15 (14.9)       & 13 (14.6)       & 16 (16.0)       & 5 (11.9)        & 50 (25.1)       \\
Serum ferritin, ng/ml  & 440.21 (671.90) & 352.33 (307.69) & 398.18 (399.88) & 486.74 (498.72) & 360.18 (378.14) \\
Ultrasound evidence of splenomegaly         & 31 (30.7)       & 26 (29.2)       & 33 (33.0)       & 13 (31.0)       & 77 (38.7)       \\
Ever drank alcohol     & 80 (79.2)       & 62 (69.7)       & 78 (78.0)       & 36 (85.7)       & 173 (86.9)      \\
Hemoglobin, g/dl   & 14.63 (1.54)    & 14.97 (1.51)    & 14.85 (1.28)    & 14.93 (1.44)    & 14.97 (1.42)    \\
Aspartate aminotransferase, U/L        & 85.70 (46.81)   & 80.64 (51.68)   & 76.92 (44.07)   & 97.38 (57.20)   & 92.72 (48.81)   \\ \bottomrule
\end{tabular}
\caption*{Results reported as mean (standard deviation) for continuous variables and count (percentage) for binary variables.\\
$A$, indicates randomization to treatment with peginterferon alfa-2a ($A=1$) versus no treatment ($A=0$); kg, kilogram; m, meter; mg, milligram; dl, deciliter; ml, milliliter; g, gram; U/L, units per liter.}
\end{table}
\end{landscape}

\clearpage
\begin{table}[!ht]
\centering
 \caption{Analysis of the HALT-C trial, stability analysis using GAMs.} \label{Table:ATE_appendix}
\begin{tabular}{ccp{.2\textwidth}p{.2\textwidth}p{.2\textwidth}}
\toprule
Center, $c$ & $n$ & \hfil$\widehat \delta_{\tau}(c,1,0)$ & \hfil$\widehat \delta_{\phi}(c,1,0)$ & \hfil$\widehat \delta_{\psi}(c,1,0)$  \\ \midrule
1 & 48 & \hfil -29.3 (11.5, -70) \par \hfil {[}20.8{]} & \hfil -43.7 (-18.1, -69.4) \par \hfil {[}13.1{]} & \hfil  -46.7 (-29.8, -63.5) \par \hfil {[}8.6{]} \\
2 & 97 & \hfil  -42.9 (-16.9, -69) \par \hfil {[}13.3{]} & \hfil -43.1 (-29.4, -56.8) \par \hfil {[}7{]} & \hfil -44.6 (-33.4, -55.8) \par \hfil {[}5.7{]} \\
3 & 130 & \hfil  -44.2 (-20.9, -67.5) \par \hfil {[}11.9{]} & \hfil  -36 (-24.3, -47.7) \par \hfil {[}6{]} & \hfil  -43.8 (-34.2, -53.3) \par \hfil {[}4.9{]} \\
4 & 66 & \hfil -21 (6.3, -48.3) \par \hfil {[}13.9{]} & \hfil -35.1 (-20, -50.2) \par \hfil {[}7.7{]} & \hfil  -36.4 (-24.5, -48.3) \par \hfil {[}6.1{]} \\
5 & 76 & \hfil -70.9 (-44.6, -97.3) \par \hfil {[}13.4{]} & \hfil  -59.3 (-45.7, -73) \par \hfil {[}7{]} & \hfil  -43 (-30.6, -55.3) \par \hfil {[}6.3{]} \\
6 & 101 & \hfil  -35.6 (-10.7, -60.5) \par \hfil {[}12.7{]} & \hfil  -32.2 (-13.9, -50.6) \par \hfil {[}9.4{]} & \hfil  -36.4 (-25.3, -47.6) \par \hfil {[}5.7{]} \\
7 & 89 & \hfil  -42.8 (-12, -73.6) \par \hfil {[}15.7{]} & \hfil -48.6 (-28.2, -68.9) \par \hfil {[}10.4{]} & \hfil -47.7 (-33.6, -61.8) \par \hfil {[}7.2{]} \\
8 & 100 & \hfil -49.7 (-26.6, -72.9) \par \hfil {[}11.8{]} & \hfil  -42.1 (-26.2, -58.1) \par \hfil {[}8.1{]} & \hfil  -39.8 (-29.2, -50.3) \par \hfil {[}5.4{]} \\
9 & 42 & \hfil  -25.8 (7.1, -58.8) \par \hfil {[}16.8{]} & \hfil -39.6 (-20.7, -58.5) \par \hfil {[}9.6{]} & \hfil  -40 (-25.2, -54.9) \par \hfil {[}7.6{]} \\
10 & 199 & \hfil  -42.8 (-25.4, -60.1) \par \hfil {[}8.9{]} & \hfil  -48.1 (-35.5, -60.7) \par \hfil {[}6.4{]} & \hfil  -47.2 (-38.2, -56.3) \par \hfil {[}4.6{]} \\ \bottomrule
\end{tabular}
\caption*{The number of individuals in center $c$ is $n_c$. The 95\% Wald confidence intervals are in parentheses. The standard error calculated from the sandwich estimator are in brackets. }
\end{table}

\clearpage
\setcounter{figure}{0}
\setcounter{equation}{0}
\section{Code \& data}\label{appendix:code}

\paragraph{Code for empirical analyses:} We provide \texttt{R} code to reproduce the analyses in Section 7 of the paper. Specifically, we provide the following files:

\begin{itemize}
\item[]   \texttt{01\_run\_haltc\_analysis\_forestplot\_tablecenters.R} produces estimates for the crude and covariate-adjusted estimators in the paper, used to create Table 5 in the main text. The code relies on two source files, \texttt{0\_sourcecode\_no\_transport.R} (crude analysis run separately in each center) and \texttt{0\_sourcecode\_reinterpret\_multicenter.R} (new estimators proposed in the paper). We use the \texttt{R} package \texttt{geex} \citelatex{saul2017} to solve the estimating equations corresponding to estimators in our paper.

\item[]  \texttt{02\_test\_conditional.R} assesses whether center-outcome associations are present, using ANCOVA to compare the expectation of the outcome in a linear regression model that includes the main effects of baseline covariates and treatment and all possible baseline covariates and treatment products, against a linear regression model that additionally includes the main effect of the center indicators and all product terms between baseline covariates, treatment, and center indicators. 

\item[]  \texttt{03\_test\_marginals.R} assesses homogeneity of average treatment effects across centers. It uses $\widehat \phi(c,a)$ and $\widehat \psi(c,a)$ to estimate the components of the contrasts involved in the null hypothesis of homogeneity, then uses an omnibus Wald chi-square test for assessing whether the center-specific treatment effects are homogeneous across centers. 

\end{itemize}

\paragraph{Data availability:} The HALT-C trial data are not publicly available, but they can be obtained from the National Institute of Diabetes and Digestive and Kidney Diseases (NIDDK) Central Repository (\url{https://repository.niddk.nih.gov/studies/halt-c/}; last accessed April 12, 2021).

\clearpage
\renewcommand{\thesection}{Appendix References}

\section{}\label{appendix:references}
\bibliographystylelatex{unsrt}
\bibliographylatex{ref_multicenter}